\newcommand{\Z}{\mathbb Z}
\newcommand{\mbC}{\mathbb C}
\newcommand{\mcM}{\mathcal M}
\newcommand{\oM}{\overline{\mathcal M}}
\newcommand{\half}{\frac{1}{2}}
\newcommand{\<}{\left <}
\renewcommand{\>}{\right >}
\newcommand*{\der}[2]{\frac{\d #1}{\d #2}}
\def\Coef{{\rm Coef}}
\def\d{\partial}
\def\CP1{\mathbb{C}\mathrm{P}^1}
\def\s{\mathfrak{s}}
\newcommand{\cL}{\mathcal L}
\newcommand{\tcL}{\widetilde{\mathcal L}}
\newcommand{\hS}{\widehat S}
\newtheorem{theorem}{Theorem}[section]
\newtheorem{proposition}[theorem]{Proposition}
\newtheorem{lemma}[theorem]{Lemma}
\newtheorem{corollary}[theorem]{Corollary}
\theoremstyle{definition}
\numberwithin{equation}{section}
\title{Open intersection numbers and the wave function of the KdV hierarchy}
\author{A. Buryak}
\address{A.~Buryak:\newline
Department of Mathematics,
ETH Zurich,\newline
Ramistrasse 101 8092, HG G 27.1, Zurich, Switzerland.} 
\email{buryaksh@gmail.com}
\subjclass[2010]{Primary 35Q53; Secondary 14H10}
\keywords{Riemann surfaces with boundary, moduli space, KdV equations}
\begin{document}

\begin{abstract}

Recently R. Pandharipande, J. Solomon and R. Tessler initiated a study of the intersection theory on the moduli space of Riemann surfaces with boundary. They conjectured that the generating series of the intersection numbers is a specific solution of a system of PDEs, that they called the open KdV equations. In this paper we show that the open KdV equations are closely related to the equations for the wave function of the KdV hierarchy. This allows us to give an explicit formula for the specific solution in terms of Witten's generating series of the intersection numbers on the moduli space of stable curves.     

\end{abstract}

\maketitle

\section{Introduction}

Denote by~$\mcM_{g,n}$ the moduli space of smooth complex algebraic curves of genus~$g$ with~$n$ distinct marked points. In~\cite{DM69} P.~Deligne and D.~Mumford defined a natural compactification $\mcM_{g,n}\subset\oM_{g,n}$ via stable curves (with possible nodal singularities). The moduli space $\oM_{g,n}$ is a nonsingular complex orbifold of dimension $3g-3+n$.

A new direction in the study of the moduli space $\oM_{g,n}$ was opened by E. Witten~\cite{Wit91}. The class $\psi_i\in H^2(\oM_{g,n};\mbC)$ is defined as the first Chern class of the line bundle over $\oM_{g,n}$ formed by the cotangent lines at the $i$-th marked point. Intersection numbers $\<\tau_{k_1}\tau_{k_2}\ldots\tau_{k_n}\>^c_g$ are defined as follows:
$$
\<\tau_{k_1}\tau_{k_2}\ldots\tau_{k_n}\>^c_g:=\int_{\oM_{g,n}}\psi_1^{k_1}\psi_2^{k_2}\ldots\psi_n^{k_n}.
$$ 
The superscript $c$ here signals integration over the moduli of closed Riemann surfaces. Let us introduce variables $u,t_0,t_1,t_2,\ldots$ and consider the generating series 
$$
F^c(t_0,t_1,\ldots;u):=\sum_{\substack{g\ge 0,n\ge 1\\2g-2+n>0}}\frac{u^{2g-2}}{n!}\sum_{k_1,\ldots,k_n\ge 0}\<\tau_{k_1}\tau_{k_2}\ldots\tau_{k_n}\>^c_gt_{k_1}t_{k_2}\ldots t_{k_n}.
$$
E.~Witten (\cite{Wit91}) conjectured that the exponent $\exp(F^c)$ is a tau-function of the KdV hierarchy. Witten's conjecture was proved by M.~Kontsevich (\cite{Kon92}). There was a later reformulation of Witten's conjecture due to R.~Dijkgraaf, E.~Verlinde and H.~Verlinde (\cite{DVV91}). They defined certain quadratic differential operators $L_n$, $n\ge -1$, and proved that Witten's conjecture is equivalent to the equations $L_n\exp(F^c)=0$, that are called the Virasoro equations.

In \cite{PST14} the authors initiated a study of the intersection theory on the moduli space of Riemann surfaces with boundary. They introduced intersection numbers on this moduli space and called them the open intersection numbers. The authors completely described them in genus~$0$. In higher genera they conjectured that the generating series of the open intersection numbers satisfies certain partial differential equations that are analogous to the KdV and the Virasoro equations. In~\cite{PST14} these equations were called the open KdV and the open Virasoro equations. 

In \cite{Bur14} we studied the open KdV equations in detail. First, we proved that the system of the open KdV equations together with the initial condition, that corresponds to the simplest open intersection numbers in genus $0$, has a unique solution. We call this solution the open potential $F^o$. Second, in~\cite{Bur14} we showed that the open potential $F^o$ satisfies the open Virasoro equations. Therefore we proved that the open KdV and the open Virasoro equations give equivalent descriptions of the intersection numbers on the moduli space of Riemann surfaces with boundary. This was left in~\cite{PST14} as a conjecture.

In this paper we show that the system of the open KdV equations is closely related to the system of equations for the wave function of the KdV hierarchy. This allows us to give an explicit formula for the open potential $F^o$ in terms of the Witten potential $F^c$.

In \cite{PST14} the authors didn't consider the contangent line bundles at boundary marked points, so the generating series of the open intersection numbers depends only on the variables~$t_0,t_1,\ldots$ and $s$. In \cite{Bur14} we suggested a natural way to introduce new variables $s_1,s_2,\ldots$ in the open potential $F^o$. The resulting power series is called the extended open potential. The new variables~$s_1,s_2,\ldots$ can be viewed as descendants of boundary points. Actually, in this paper we derive an explicit formula for the extended open potential. We also construct Virasoro type equations for the extended open potential. Surprisingly, our new Virasoro equations are even simpler then the open Virasoro equations for the potential $F^o$.   

\subsection{Moduli of Riemann surfaces with boundary}\label{subsection:open moduli}

Here we briefly recall the main definitions from~\cite{PST14}. 

Let $\Delta\in\mbC$ be the open unit disk, and let $\overline\Delta$ be its closure. An extendable embedding of the open disk $\Delta$ in a compact Riemann surface $f\colon\Delta\to C$ is a holomorphic map which can be extended to a holomorphic embedding of an open neighborhood of $\overline\Delta$. Two extendable embeddings are disjoint, if the images of $\overline\Delta$ are disjoint.

A Riemann surface with boundary $(X,\d X)$ is obtained by removing a finite positive number of disjoint extendable open disks from a connected compact Riemann surface. A compact Riemann surface is not viewed here as Riemann surface with boundary.

To a Riemann surface with boundary $(X,\d X)$, we can canonically construct the double via the Schwartz reflection through the boundary. The double $D(X,\d X)$ of $(X,\d X)$ is a compact Riemann surface. The doubled genus of $(X,\d X)$ is defined to be the usual genus of $D(X,\d X)$.

On a Riemann surface with boundary $(X,\d X)$, we consider two types of marked points. The markings of interior type 
are points of $X\backslash\d X$. The markings of boundary type are points of~$\d X$. Let $\mcM_{g,k,l}$ denote the moduli space of Riemann surfaces with boundary of doubled genus~$g$ with~$k$ distinct boundary markings and $l$ distinct interior markings. The moduli space~$\mcM_{g,k,l}$ is defined to be empty  unless the stability condition 
$$
2g-2+k+2l>0
$$
is satisfied. The moduli space $\mcM_{g,k,l}$ is a real orbifold of real dimension $3g-3+k+2l$. 

The cotangent line classes $\psi_i\in H^2(\mcM_{g,k,l};\mbC)$ are defined (as before) as the first Chern classes of the cotangent line bundles associated to the interior markings. In \cite{PST14}, cotangent lines at the boundary points are not considered. Open intersection numbers are defined by
\begin{gather}\label{eq: open intersections}
\left<\tau_{a_1}\tau_{a_2}\ldots\tau_{a_l}\sigma^k\right>^o_g:=\int_{\oM_{g,k,l}}\psi_1^{a_1}\psi_2^{a_2}\ldots\psi_l^{a_l}.
\end{gather}
To rigorously define the right-hand side of~\eqref{eq: open intersections}, at least three significant steps must be taken:
\begin{itemize}

\item A natural compactification $\mcM_{g,k,l}\subset\oM_{g,k,l}$ must be constructed. Candidates for $\oM_{g,k,l}$ are themselves real orbifolds with boundary $\d\oM_{g,k,l}$;

\item For integration over $\oM_{g,k,l}$ to be well-defined, boundary conditions of the integrand along $\d\oM_{g,k,l}$ must be specified;

\item Orientation issues should be resolved, since the moduli space $\mcM_{g,k,l}$ is in general non-orientable.  

\end{itemize}
All three steps are completed in genus~$0$ in~\cite{PST14}. The higher genus constructions will appear in upcoming work of J.~P.~Solomon and R.~J.~Tessler~\cite{ST}.

\subsection{Open KdV equations}

Here we recall the open KdV equations and formulate the main conjecture from~\cite{PST14}. 

Let~$F$ be a formal power series in the variables $t_0,t_1,\ldots$ and~$s$ with the coefficients from~$\mbC[u,u^{-1}]$. The open KdV equations are the following partial differential equations:
\begin{gather}\label{eq:open kdv}
\frac{2n+1}{2}\frac{\d F}{\d t_n}=u\frac{\d F}{\d s}\frac{\d F}{\d t_{n-1}}+u\frac{\d^2 F}{\d s\d t_{n-1}}+\frac{u^2}{2}\frac{\d F}{\d t_0}\frac{\d^2 F^c}{\d t_0\d t_{n-1}}-\frac{u^2}{4}\frac{\d^3 F^c}{\d t_0^2\d t_{n-1}},\quad n\ge 1.
\end{gather}
In \cite{PST14} the authors conjectured, that the generating series 
$$
F^{o,geom}(t_0,t_1,\ldots,s;u):=\sum_{\substack{g,k,l\ge 0\\2g-2+k+2l>0}}\sum_{a_1,\ldots,a_l\ge 0}\frac{u^{g-1}}{k!l!}\left<\tau_{a_1}\tau_{a_2}\ldots\tau_{a_l}\sigma^k\right>^o_g s^k \prod_{i=1}^lt_{a_i}
$$
satisfies the open KdV equations. The following initial condition follows easily from the definitions:
\begin{gather}\label{eq:initial conditions}
\left.F^{o,geom}\right|_{t_{\ge 1}=0}=u^{-1}\left(\frac{s^3}{6}+t_0 s\right).
\end{gather}
It is clear that the open KdV equations together with this initial condition and the closed potential~$F^c$ completely determine the power series~$F^{o,geom}$. Denote by~$F^o$ a unique solution of the system~\eqref{eq:open kdv}, that satisfies the initial condition~\eqref{eq:initial conditions}. The existence of such a solution is non-trivial and was proved in~\cite{Bur14}. The power series~$F^o$ will be called the open potential. So the main conjecture from~\cite{PST14} can be formulated as the equation~$F^{o,geom}=F^o$.

In \cite{Bur14} we proved, that the power series $F^o$ satisfies also the following equation:
$$
\der{F^o}{s}=u\left(\frac{1}{2}\left(\der{F^o}{t_0}\right)^2+\frac{1}{2}\frac{\d^2 F^o}{\d t_0^2}+\frac{\d^2 F^c}{\d t_0^2}\right).
$$
This equation allows us to eliminate the derivatives by $s$ on the right-hand side of~\eqref{eq:open kdv}. Therefore the potential $F^o$ has the following equivalent description. It is a unique solution of the system
\begin{align}
F_s&=u\left(\frac{1}{2}F_0^2+\frac{1}{2}F_{0,0}+F^c_{0,0}\right),\label{eq:s-equation}\\
\frac{2n+1}{2u^2}F_n&=\left(\frac{1}{2}\frac{\d^2}{\d t_0^2}+F_0\frac{\d}{\d t_0}+\frac{1}{2}F_0^2+\frac{1}{2}F_{0,0}+F^c_{0,0}\right)F_{n-1}+\frac{1}{2}F_0 F^c_{0,n-1}+\frac{3}{4}F^c_{0,0,n-1},\,n\ge 1,\label{eq:first half}
\end{align}
with the initial condition 
\begin{gather}\label{eq:initial condition2}
\left.F\right|_{\substack{t_{\ge 1}=0\\s=0}}=0.
\end{gather}
Here we use the subscript $n$ for the partial derivative by $t_n$. 

Consider variables $s_0,s_1,s_2,\ldots$ and let $s_0:=s$. Let $F$ be a formal power series in the variables~$t_0,t_1,\ldots$,~$s_0,s_1,\ldots$ with the coefficients from $\mbC[u,u^{-1}]$. In \cite{Bur14} we introduced the following equations:
\begin{gather}\label{eq:second half}
\frac{n+1}{u^2}F_{s_n}=\left(\frac{1}{2}\frac{\d^2}{\d t_0^2}+F_0\frac{\d}{\d t_0}+\frac{1}{2}F_0^2+\frac{1}{2}F_{0,0}+F^c_{0,0}\right)F_{s_{n-1}},\quad n\ge 1.
\end{gather}
We proved that the system \eqref{eq:s-equation}-\eqref{eq:first half} together with \eqref{eq:second half} has a unique solution for an arbitrary initial condition of the form $\left.F\right|_{\substack{t_{\ge 1}=0\\s_*=0}}=p(t_0)$, where $p$ is a polynomial in $t_0,u,u^{-1}$. Let us denote by~$F^{o,ext}$ a unique solution of this big system, that is specified by the initial condition
\begin{gather}\label{eq:extended initial condition}
\left.F\right|_{\substack{t_{\ge 1}=0\\s_*=0}}=0.
\end{gather}
We call this solution the extended open potential. We obviously have 
$$
\left.F^{o,ext}\right|_{s_{\ge 1}=0}=F^o.
$$

The main observation of this paper is that the system of equations \eqref{eq:s-equation},\eqref{eq:first half} and~\eqref{eq:second half} is in fact not new. If we rewrite it for the exponent $\exp(F)$, we get the system that coincides with the system of equations for the wave function of the KdV hierarchy. This allows us to derive an explicit formula for~$F^{o,ext}$ that we present in the next section.


\subsection{Explicit formula for $F^{o,ext}$}\label{subsection:explicit formula}

In this section we formulate the main result of the paper -- an explicit formula for the potential $F^{o,ext}$. 

Denote by $G_z$ the shift operator that acts on a power series $f(t_0,t_1,\ldots)\in\mbC[u,u^{-1}][[t_0,t_1,\ldots]]$ as follows:
\begin{gather*}
G_z(f)(t_0,t_1,\ldots):=f\left(t_0-\frac{k_0}{z},t_1-\frac{k_1}{u^2 z^3},t_2-\frac{k_2}{u^4z^5},\ldots\right),
\end{gather*}
where $k_n:=(2n-1)!!$ and, by definition, we put $(-1)!!:=1$. Let 
$$
\xi(t_0,t_1,\ldots,s_0,s_1,\ldots;z):=\sum_{i\ge 0}\frac{t_i u^{2i} z^{2i+1}}{(2i+1)!!}+\sum_{i\ge 0}\frac{s_i u^{2i+1} z^{2i+2}}{2^{i+1}(i+1)!}.
$$
Let us also introduce a Laurent series $D(z)\in\mbC[u,u^{-1}][[z^{-1}]]$. Define the numbers $a_0,a_1,\ldots$ and $d_0,d_1,d_2,\ldots$ by
\begin{align}
&a_n:=(-1)^n\frac{(6n)!}{288^n(2n)!(3n)!},\notag\\
&d_n:=\sum_{i=0}^n 3^i|a_{n-i}|\prod_{k=1}^i\left(n+\half-k\right)\label{eq:definition of D}.
\end{align}
The Laurent series $D(z)$ is defined by $D(z):=1+\sum_{i\ge 1}\frac{d_i}{u^{2i} z^{3i}}$. Now we are ready to formulate our main theorem.

\begin{theorem}\label{theorem:main}
We have
\begin{gather}\label{eq:main formula}
\exp(F^{o,ext})=\Coef_{z^0}\left[D(z)\frac{G_z(\exp\left(F^c)\right)}{\exp(F^c)}\exp(\xi)\right].
\end{gather}
\end{theorem}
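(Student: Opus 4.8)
The plan is to reinterpret the system \eqref{eq:s-equation}, \eqref{eq:first half}, \eqref{eq:second half} as equations for a wave function of the KdV hierarchy, solve those equations explicitly in terms of $F^c$, and then identify the resulting formula with the right-hand side of \eqref{eq:main formula} by a uniqueness argument.

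First I would pass from $F$ to $\Psi:=\exp(F)$. A direct computation shows that the operator $\frac{1}{2}\frac{\d^2}{\d t_0^2}+F_0\frac{\d}{\d t_0}+\frac{1}{2}F_0^2+\frac{1}{2}F_{0,0}+F^c_{0,0}$ acting on $F_{n-1}$ (resp.\ $F_{s_{n-1}}$) is exactly what one gets by conjugating $\frac12\d_{t_0}^2+u^2 F^c_{0,0}$ by $\Psi$ and then taking $\d_{t_{n-1}}$ (resp.\ $\d_{s_{n-1}}$); more precisely, writing $L:=\frac{u^2}{2}\d_{t_0}^2 + u^2 F^c_{0,0} = \frac{u^2}{2}\d_{t_0}^2 + \frac{u^2}{2}(\text{the KdV potential})$ — here one uses that $2F^c_{0,0}$ is, up to the standard normalization, the solution of KdV whose tau function is $\exp(F^c)$ — the equations become
\begin{gather*}
u^2\,\d_{t_n}\Psi = \tfrac{2}{2n+1}\,L^{(n)}\Psi,\qquad u^2\,\d_{s_n}\Psi = \tfrac{1}{n+1}\,M^{(n)}\Psi,
\end{gather*}
for suitable fractional-power pseudodifferential ``Hamiltonians'' $L^{(n)}$, $M^{(n)}$ built from $L$. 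I would then recognize the left column as the defining equations of the KdV wave (Baker–Akhiezer) function $\psi(t_*;z)$ with spectral parameter related to $z$ by the $\xi$-substitution, using the classical Sato–Segal–Wilson description: the wave function is $\psi = \left(\frac{G_z(\exp F^c)}{\exp F^c}\right)\exp(\xi_{\mathrm{closed}})$ where $\xi_{\mathrm{closed}}$ is the $s$-independent part of $\xi$ and the shift $G_z$ is the standard Miwa/vertex-operator shift of the KdV times $t_n\mapsto t_n - [z^{-1}]_n$. The equations \eqref{eq:second half} in the $s$-variables are a second, commuting family of linear flows annihilating the same $L$; solving them amounts to multiplying the wave function by $\exp$ of the $s$-part of $\xi$, twisted by the extra Laurent series $D(z)$, which is exactly the object that absorbs the discrepancy between the naive generating function $\sum s_i u^{2i+1}z^{2i+2}/(2^{i+1}(i+1)!)$ and an honest time of the hierarchy — the numbers $a_n$, $d_n$ in \eqref{eq:definition of D} being the coefficients that make $D(z)\exp(\xi)$ solve the $s_n$-flows on the nose.

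Concretely, the steps are: (1) conjugate the system by $\Psi$ and identify it with the linear system for the KdV wave function plus auxiliary flows; (2) invoke the uniqueness statement from \cite{Bur14} (the big system with initial condition \eqref{eq:extended initial condition} has a unique solution), so it suffices to check that the right-hand side of \eqref{eq:main formula} satisfies the system and the initial condition; (3) verify \eqref{eq:s-equation} and \eqref{eq:first half} for the candidate by using the known bilinear/Hirota identities for $\exp(F^c)$ (equivalently, the fact that $\exp(F^c)$ is a KdV tau function, which is Witten–Kontsevich) together with the explicit form of $G_z$ and $\xi$; (4) verify \eqref{eq:second half} by a direct differentiation in $s_n$, which reduces to a recursion on the $d_n$ that is precisely \eqref{eq:definition of D}; (5) check the initial condition by setting $t_{\ge 1}=0$, $s_*=0$ and extracting the $z^0$ coefficient, where everything collapses to $D(z)$ times elementary Gaussian-type integrals and one recovers $\exp(0)=1$, i.e.\ $F^{o,ext}|=0$.

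The main obstacle will be step (4)/(5): pinning down the Laurent series $D(z)$. The flows \eqref{eq:second half} are \emph{not} literally the standard KdV times acting on the wave function — the ``wrong'' normalization $2^{i+1}(i+1)!$ in the $s$-part of $\xi$ means that $\exp(\xi)$ alone does not solve \eqref{eq:second half}, and one must show that the unique correcting factor of the form $1+O(z^{-3})$ is exactly $D(z)$ with $d_n$ given by \eqref{eq:definition of D}. I expect this to require computing the action of $L^{(n)}$ on $\exp(\xi)$ in closed form, matching powers of $z$, and recognizing the resulting recursion as the one defining $d_n$ via the constants $a_n = (-1)^n (6n)!/(288^n (2n)!(3n)!)$ (these are the asymptotic-expansion coefficients of the relevant Airy-type integral). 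Verifying the initial condition then also hinges on knowing $D(z)$ explicitly, since the $z^0$-extraction at $t_{\ge1}=s_*=0$ is a nontrivial contour/coefficient computation that must come out to $1$.
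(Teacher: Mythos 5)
Your steps (1)--(3) do match the paper's architecture: the paper's Theorem~\ref{theorem:open kdv and the wave function} is exactly your conjugation of the system by $\Psi=\exp(F)$ into the linear wave-function equations, and the paper also concludes by observing that both $\exp(F^{o,ext})$ and $\Coef_{z^0}(D\cdot\psi)$ solve the same linear system, so that only the initial condition at $t_{\ge 1}=s_*=0$ needs to be checked. The genuine gap is in your steps (4)--(5), i.e.\ precisely where $D(z)$ has to be pinned down. Your proposed mechanism cannot work: with the convention $(2n)!!=2^n n!$, the normalization $2^{i+1}(i+1)!=(2i+2)!!$ in the $s$-part of $\xi$ is exactly the standard half-integer-time extension, and since $L\psi=z^2\psi$, the half-integer flows act on $\psi$ simply by multiplication by $\frac{u^{2k+1}z^{2k+2}}{(2k+2)!!}$. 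Consequently $D(z)\psi$ satisfies the $s_n$-flow equations \eqref{eq:second half} for an \emph{arbitrary} Laurent series $D(z)$ constant in $t_*,s_*$ (and $\Coef_{z^0}$ commutes with the operators $(L^{n+\half})_+$, which do not involve $z$). So matching powers of $z$ in the $s_n$-flows yields no condition on the $d_n$ at all, and your claimed recursion ``making the flows hold on the nose'' does not exist; the definition \eqref{eq:definition of D} cannot be recovered this way.

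The series $D$ is determined only by the remaining step, the initial-condition check that $\Coef_{z^0}(D\cdot\psi)\big|_{t_{>0}=0}=1$ identically in $t_0$, and this is far from ``elementary Gaussian-type integrals'': it is the technical heart of the paper. The paper handles it by proving Proposition~\ref{proposition:main proposition}, an extended string equation $\cL_{-1}^{ext}\Coef_{z^0}(D\cdot H)=0$, whose proof requires: reducing to $t_{>0}=0$ via the linear system; the relation $\frac{\d f}{\d x}=S_y f$ for the restricted wave function; the identification $f|_{x=0}=A(y)$ with the Airy-type series (via $S_y^2 f=\frac{1}{y^2}f$, coming from $L\psi=z^2\psi$); the property $\hS_y D=\frac{A(-y)}{y}$, which is exactly equivalent to the recursion $d_n=(3n-\tfrac32)d_{n-1}+|a_n|$ hidden in \eqref{eq:definition of D}; the parity structure $S_y^nA=Q_nA+R_n\frac{B}{y}$ with even $Q_n,R_n$; and the quadratic Airy identity $A(y)B(-y)+A(-y)B(y)=2$ from~\cite{PPZ13}. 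None of these ingredients appear in your outline, and without them (or a substitute, e.g.\ a direct proof that $\Coef_{y^0}(D\cdot f)$ is independent of $x$) you have no way to verify the initial condition or to explain why the specific coefficients $d_n$ of \eqref{eq:definition of D} are the right ones.
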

\noindent We see that $D(z)\frac{G_z(\exp\left(F^c)\right)}{\exp(F^c)}$ is a power series in $z^{-1}$. On the other hand,~$\exp(\xi)$ is a power series in $z$. In general one can have problems with the multiplication of two such series. In our case the issue can be solved as follows. Let us introduce a gradation in the ring $\mbC[u,u^{-1}][[t_0,t_1,\ldots,s_0,s_1,\ldots]]$ assigning to $t_i$ degree $2i+1$ and to $s_i$ degree $2i+2$. Since the degree of the coefficient of~$z^i$ in~$\exp(\xi)$ grows as~$i$ grows, the product in the square brackets on the right-hand side of~\eqref{eq:main formula} is well-defined.


\subsection{Extended open Virasoro equations}

Here we recall the open Virasoro equations from~\cite{PST14} and construct Virasoro type equations for the extended potential~$F^{o,ext}$.

The Virasoro operators $L_n,n\ge -1$, are defined as follows:
\begin{multline*}
L_n:=\sum_{i\ge 0}\frac{(2i+2n+1)!!}{2^{n+1}(2i-1)!!}(t_i-\delta_{i,1})\frac{\d}{\d t_{i+n}}+\frac{u^2}{2}\sum_{i=0}^{n-1}\frac{(2i+1)!!(2n-2i-1)!!}{2^{n+1}}\frac{\d^2}{\d t_i\d t_{n-1-i}}\\
+\delta_{n,-1}\frac{t_0^2}{2 u^2}+\delta_{n,0}\frac{1}{16}.
\end{multline*}
The Witten potential $F^c$ satisfies the so-called Virasoro equations (see \cite{DVV91}): 
\begin{gather}\label{eq:closed virasoro}
L_n\exp(F^c)=0,\quad n\ge -1.
\end{gather}
The equation $L_{-1}\tau^c=0$ is called the string equation.

In \cite{PST14} the authors introduced the following modified operators:
\begin{gather*}
\cL_n:=L_n+u^n s\frac{\d^{n+1}}{\d s^{n+1}}+\frac{3n+3}{4}u^n\frac{\d^n}{\d s^n},\quad n\ge -1,
\end{gather*}
and conjectured that
\begin{gather}\label{eq:open Virasoro equations}
\cL_n\exp(F^o+F^c)=0,\quad n\ge -1.
\end{gather}
These equations are called the open Virasoro equations. We proved this conjecture in \cite{Bur14}.

Introduce operators $\cL_n^{ext}$, $n\ge -1$, by
$$
\cL_n^{ext}:=L_n+\sum_{i\ge 0}\frac{(i+n+1)!}{i!}s_i\frac{\d}{\d s_{n+i}}+\frac{3(n+1)!}{4}u\frac{\d}{\d s_{n-1}}+\delta_{n,-1}u^{-1}s+\delta_{n,0}\frac{3}{4}.
$$
Here we, by definition, put $\frac{\d}{\d s_{-2}}:=\frac{\d}{\d s_{-1}}:=0$. Our second result is the following theorem.
\begin{theorem}\label{theorem:virasoro}
For any $n\ge -1$ we have 
\begin{gather}\label{eq:new virasoro}
\cL_n^{ext}\exp(F^{o,ext}+F^c)=0.
\end{gather}
\end{theorem}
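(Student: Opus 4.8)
The plan is to deduce Theorem~\ref{theorem:virasoro} from the explicit formula of Theorem~\ref{theorem:main} together with the closed Virasoro equations~\eqref{eq:closed virasoro}. Write $\tau^c:=\exp(F^c)$ and, for a power series $f$, put
\begin{gather*}
\Phi_f(z):=D(z)\,G_z(f)\,\exp(\xi),
\end{gather*}
so that Theorem~\ref{theorem:main} reads $\exp(F^{o,ext}+F^c)=\Coef_{z^0}\bigl[\Phi_{\tau^c}(z)\bigr]$. The first step is to turn the $s$-part of $\cL_n^{ext}$ into a multiplication operator: the variables $s_0,s_1,\ldots$ enter $\Phi_f(z)$ only through $\exp(\xi)$, and $\frac{\d}{\d s_i}\exp(\xi)=\frac{u^{2i+1}z^{2i+2}}{2^{i+1}(i+1)!}\exp(\xi)$, so the operator $\sum_{i\ge0}\frac{(i+n+1)!}{i!}s_i\frac{\d}{\d s_{n+i}}+\frac{3(n+1)!}{4}u\frac{\d}{\d s_{n-1}}+\delta_{n,-1}u^{-1}s+\delta_{n,0}\frac34$ acts on $\Phi_f(z)$ as multiplication by an explicit series $M_n(z)$ with coefficients in $\mbC[u,u^{-1}][s_0,s_1,\ldots]$. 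Thus it remains to prove that $\Coef_{z^0}\bigl[L_n\Phi_{\tau^c}(z)+M_n(z)\,\Phi_{\tau^c}(z)\bigr]=0$. (This is the concrete form taken here by the general principle that the Virasoro constraints of a KdV tau function induce Virasoro-type constraints on its wave function.)

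The second step handles $L_n$. Since $G_z$ translates the variable $t_i$ by the constant $(2i-1)!!\,u^{-2i}z^{-2i-1}$, and $\exp(\xi)$ is the exponential of a function linear in the $t_i$, conjugating the second-order operator $L_n$ both by $G_z$ and by $\exp(\xi)$ produces only lower-order corrections with explicit $z$-dependent coefficients. Expanding $L_n\Phi_{\tau^c}(z)$ accordingly and using $L_n\tau^c=0$ from~\eqref{eq:closed virasoro}, the leading term $D(z)\,G_z(L_n\tau^c)\,\exp(\xi)$ vanishes, and a direct computation --- this is the point where the precise values $(2i-1)!!$ of the shifts of $G_z$ and the coefficients $\frac{u^{2i}z^{2i+1}}{(2i+1)!!}$ of $\xi$ enter --- shows that every leftover term carrying a first $t$-derivative $G_z\bigl(\frac{\d}{\d t_j}\tau^c\bigr)$ recombines into exactly $\frac{u^{2n}z^{2n+1}}{2^{n+1}}$ times the corresponding part of $\frac{\d}{\d z}\Phi_{\tau^c}(z)$. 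One is thereby reduced to an identity of the form
\begin{gather*}
\Coef_{z^0}\!\left[\frac{u^{2n}z^{2n+1}}{2^{n+1}}\frac{\d}{\d z}\Phi_{\tau^c}(z)+\mathcal E_n(z)\,\Phi_{\tau^c}(z)\right]=0,
\end{gather*}
where $\mathcal E_n(z)$ is an explicit Laurent series in $z$, involving $\frac{D'(z)}{D(z)}$, with coefficients in $\mbC[u,u^{-1}][t_0,s_0]$; in this reduction the anomaly terms $\delta_{n,-1}\frac{t_0^2}{2u^2}$ and $\delta_{n,0}\frac{1}{16}$ of $L_n$ are matched against the anomaly terms $\delta_{n,-1}u^{-1}s$ and $\delta_{n,0}\frac34$ of $\cL_n^{ext}$.

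The final step is the verification of the last displayed identity. Here one uses, on the one hand, the defining formula~\eqref{eq:definition of D} for the coefficients $d_i$ of $D(z)$ --- equivalently, the functional equation it implies for $D(z)$, of the kind satisfied by the asymptotic series of the Airy function --- which controls the $\frac{D'(z)}{D(z)}$ contribution; and, on the other hand, the structure of $\tau^c$ as a KdV tau function annihilated by all the Virasoro operators, which supplies the relations among the $z$-coefficients of $\Phi_{\tau^c}(z)$ needed to cancel the remaining terms. I expect this last step --- together with the recombination of the derivative terms in the second step --- to be the main obstacle: it is precisely where the specific shape of $D(z)$ is forced, and where the interplay of $D(z)$, the exponent $\xi$ and the Virasoro structure of $\tau^c$ must be tracked coefficient by coefficient. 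I would first settle the cases $n=-1$ and $n=0$ (the string and dilaton cases) by hand, so as to pin down all normalisations, and then run $n\ge1$ uniformly; the first step above is routine bookkeeping.
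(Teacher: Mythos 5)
Your steps 1--2 do parallel the opening ``direct computation'' of the paper's proof: the $s$-part of $\cL_n^{ext}$ indeed acts on $\Phi_{\tau^c}(z)=D(z)G_z(\tau^c)e^{\xi}$ by multiplication, and after using $L_n\tau^c=0$ the residue of $L_n$ recombines into
$\frac{u^{2n+2}z^{2n+2}}{2^{n+1}}\bigl(\frac{1}{u^2z}\der{}{z}-\frac{1}{2u^2z^2}-z\bigr)H+\frac{(n+1)u^{2n}z^{2n}}{2^n}H$ with $H=G_z(\tau^c)e^{\xi}$, which is your ``$\frac{u^{2n}z^{2n+1}}{2^{n+1}}\d_z\Phi$ plus an explicit multiplier'' (in fact the residual multiplier is constant in all $t_i,s_i$, not merely in $\mbC[u,u^{-1}][t_0,s_0]$ -- a sign the computation was not actually carried out). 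The genuine gap is your final step. You must show that $\Coef_{z^0}$ of this expression vanishes, but multiplication by the powers $z^{2n+2}$, $z^{2n}$ does not commute with $\Coef_{z^0}$, and integrating by parts in $z$ only trades the problem for other coefficients $\Coef_{z^j}(D\cdot\psi)$, $j\neq 0$, about which neither the recursion \eqref{eq:definition of D} for $D$ nor the Virasoro constraints on $\tau^c$ say anything directly; your appeal to ``relations among the $z$-coefficients of $\Phi$ supplied by the Virasoro structure'' is exactly the statement that needs proof. The mechanism that makes the argument close, and which you never identify, is the eigenvalue equation \eqref{eq:second equation for the wave function}, $L\psi=z^2\psi$: it converts $z^{2n+2}H$ into $\bigl(\tau^c\circ L^{n+1}\circ(\tau^c)^{-1}\bigr)H$, a $z$-independent differential operator in $x=t_0$, which \emph{does} commute with $\Coef_{z^0}(D\cdot{-})$; combined with $\cL_{-1}^{ext}\circ\tau^c=\tau^c\circ\tcL_{-1}^{ext}$ and $[\tcL_{-1}^{ext},L]=-\frac{2}{u^2}$ (consequences of the string equation), this reduces every case $n\ge 0$ to the single case $n=-1$.

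That base case is not a ``normalisation check to settle by hand'': it is precisely Proposition~\ref{proposition:main proposition} (equivalently Corollary~\ref{corollary:extended open string}), i.e.\ $\cL_{-1}^{ext}\Coef_{z^0}(D\cdot H)=0$, whose proof is the technical core of Section~\ref{section:main theorem} -- restriction to $t_{>0}=0$, the homogeneity and Airy-type equation for $f$, Lemma~\ref{lemma:equation for D} ($\hS_yD=\frac{A(-y)}{y}$), the parity argument and the identity $A(y)B(-y)+A(-y)B(y)=2$. Your proposal neither cites this result nor offers an alternative route to it, and your $n\ge 1$ argument provides no workable substitute for the $L\psi=z^2\psi$ reduction. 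As written, the proof stops short of the actual content of the theorem.
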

\noindent We call these equations the extended open Virasoro equations. At first glance, our new Virasoro operators $\cL_n^{ext}$ look quite different from the operators $\cL_n$. Actually, we will show that the open Virasoro equations~\eqref{eq:open Virasoro equations} can be easily derived from the extended open Virasoro equations~\eqref{eq:new virasoro}.


\subsection{Organization of the paper}

In Section~\ref{section:kdv hierarchy} we recall all necessary facts about the KdV hierarchy. In Section~\ref{section:open kdv and wave function} we show that the exponent $\exp\left(F^{o,ext}\right)$ satisfies the same equations as the wave function of the KdV hierarchy. Section~\ref{section:main theorem} is devoted to the proof of Theorem~\ref{theorem:main}. In Section~\ref{section:virasoro equations} we prove Theorem~\ref{theorem:virasoro}. Appendix~\ref{section:technical lemmas} contains some technical statements that we use in Sections~\ref{section:open kdv and wave function} and \ref{section:main theorem}.


\subsection{Acknowledgments}

We thank G. Carlet, R. Pandharipande, S. Shadrin, J. Solomon and R. Tessler for discussions related to the work presented here.

We would like to thank the anonymous referee for valuable remarks and suggestions that allowed us to improve the exposition of this paper.

This work was supported by grant ERC-2012-AdG-320368-MCSK in the group of R. Pandharipande at ETH Zurich, by the Russian Federation Government grant no. 2010-220-01-077 (ag. no. 11.634.31.0005), the grants RFFI 13-01-00755 and NSh-4850.2012.1.

Part of the work was completed during the visit of the author to the Einstein Institute of Mathematics of the Hebrew University of Jerusalem in 2014.

\subsection{Note added in proof}

We would like to mention several papers~\cite{Ale14a,Ale14b,BY14,BH15,BT15} that are related to the present work and that appeared while this paper was under consideration in the journal. In \cite{Ale14b} (see also~\cite{Ale14a}) A. Alexandrov used Theorem~\ref{theorem:main} and derived a matrix model for the exponent~$e^{F^{o,ext}+F^c}$. He also proved that after a certain simple change of variables the exponent~$e^{F^{o,ext}+F^c}$ becomes a tau-function of the KP hierarchy. In \cite{BY14} M.~Bertola and D.~Yang suggested a natural candidate for the generating series of open $r$-spin intersection numbers and found a generalization of our Theorem~\ref{theorem:main}. In~\cite{BT15}, together with R.~Tessler we proved the main conjecture from~\cite{PST14} that the generating series of the open intersection numbers satisfies the open KdV equations.


\section{Lax form of the KdV hierarchy}\label{section:kdv hierarchy}

Here we recall the Lax formalism in the theory of the KdV hierarchy. The book~\cite{Dic03} is a very good reference to this subject. Our notations are slightly different, because we rescale the flows of the KdV hierarchy and also we insert the formal variable $u$ in the coefficients of the equations of the hierarchy. Similar notations are used in~\cite{CvdLS14}. 

\subsection{Pseudo-differential operators}

A pseudo-differential operator $A$ is a Laurent series 
$$
A=\sum_{n=-\infty}^m a_n(t)\d_x^n,
$$
where $m$ is an arbitrary integer and $a_n$ are formal power series in $t_0,t_1,\ldots$ with the coefficients from $\mbC[u,u^{-1}]$. We will always identify $t_0$ with $x$. Let
\begin{gather*}
A_+:=\sum_{n=0}^m a_n\d_x^n,\quad\text{and}\quad A_-:=A-A_+.
\end{gather*}
The product of pseudo-differential operators is defined by the following commutation rule:
\begin{gather*}
\d_x^k\circ f:=\sum_{l=0}^\infty\frac{k(k-1)\ldots(k-l+1)}{l!}\frac{\d^l f}{\d x^l}\d_x^{k-l},
\end{gather*} 
where $k\in\Z$ and $f\in\mbC[u,u^{-1}][[t_0,t_1,\ldots]]$. The conjugate $A^*$ of a pseudo-differential operator $A=\sum_{n=-\infty}^m a_n\d_x^n$ is defined by
$$
A^*:=\sum_{n=-\infty}^m (-1)^n\d_x^n\circ a_n.
$$

It is not hard to check that, for any pseudo-differential operator $P$ of the form $P=1+\sum_{n\ge 1}p_n(t)\d_x^{-n}$, there exists a unique inverse operator $P^{-1}$ of the same form $P^{-1}=1+\sum_{n\ge 1}\widetilde p_n(t)\d_x^{-n}$. 

Let $m\ge 2$. Consider a pseudo-differential operator $A$ of the form 
$$
A=\d_x^m+\sum_{n=1}^\infty a_n\d_x^{m-n}.
$$
It is not hard to see that there exists a unique pseudo-differential operator $A^{\frac{1}{m}}$ of the form
$$
A^{\frac{1}{m}}=\d_x+\sum_{n=1}^\infty \widetilde{a}_n\d_x^{m-n},
$$
such that $\left(A^{\frac{1}{m}}\right)^m=A$. This operator is called the $m$-th root of $A$.


\subsection{Lax form of the KdV hierarchy}

Let 
$$
L:=\d_x^2+2w,\quad w\in\mbC[u,u^{-1}][[t_0,t_1,\ldots]].
$$
It is easy to check that, for any $n\ge 1$, the commutator $\left[\left(L^{n+\frac{1}{2}}\right)_+,L\right]$ doesn't have terms with positive powers of $\d_x$. The KdV hierarchy is the following system of partial differential equations for the power series $w$:
\begin{gather}\label{eq:Lax equations}
\frac{\d L}{\d t_n}=\frac{u^{2n}}{(2n+1)!!}\left[\left(L^{n+\frac{1}{2}}\right)_+,L\right],\quad n=1,2,\ldots
\end{gather}
This form of the KdV equations is called the Lax form.


\subsection{Dressing operator}

Let $L$ be a solution of the Lax equations~\eqref{eq:Lax equations}. Then there exists a pseudo-differential operator $P$ of the form
\begin{gather}\label{dressing operator}
P=1+\sum_{n\ge 1}p_n(t)\d_x^{-n},
\end{gather}
such that
\begin{align}
&L=P\circ\d_x^2\circ P^{-1},\quad\text{and}\label{L from dressing operator}\\
&\frac{\d P}{\d t_n}=-\frac{u^{2n}}{(2n+1)!!}\left(L^{n+\frac{1}{2}}\right)_-\circ P,\quad n=1,2\ldots.\label{eq:Sato-Wilson equations}
\end{align}
The operator~$P$ is called the dressing operator and the Laurent series 
$$
\widehat P(t;z):=1+\sum_{n\ge 1}p_n(t)z^{-n},
$$
is called the symbol of the dressing operator $P$. Equations~\eqref{eq:Sato-Wilson equations} are called the Sato-Wilson equations. 

We also have the following converse statement. Let $P$ be a pseudo-differential operator of the form~\eqref{dressing operator}. Suppose that the operator $P$ satisfies the equations
$$
\frac{\d P}{\d t_n}=-\frac{u^{2n}}{(2n+1)!!}\left(P\circ\d_x^{2n+1}\circ P^{-1}\right)_-\circ P,\quad n=1,2\ldots,
$$
and has the property $\left(P\circ\d_x^2\circ P^{-1}\right)_-=0$. Define an operator~$L$ by~\eqref{L from dressing operator}. Then the operator~$L$ has the form $L=\d_x^2+2w$ for some power series $w\in\mbC[u,u^{-1}][[t_0,t_1,\ldots]]$ and it satisfies the Lax equations~\eqref{eq:Lax equations}. 


\subsection{Wave function}

Let $P=1+\sum_{n\ge 1}p_n(t)\d_x^{-n}$ be the dressing operator for some solution~$L$ of the Lax equations~\eqref{eq:Lax equations}. It is convenient to introduce variables $t_n$ with $n\in\frac{1}{2}+\Z_{\ge 0}$ and put $t_{\frac{1}{2}+n}:=s_n$. Then the power series $\xi(t_*,s_*;z)$, that we defined in Section~\ref{subsection:explicit formula}, can be written as follows:
$$
\xi(t;z)=\sum_{i\in\frac{1}{2}\Z_{\ge 0}}\frac{t_i u^{2i} z^{2i+1}}{(2i+1)!!}.
$$
Here $(2n)!!:=2^n n!$. The product 
$$
\psi(t;z):=\widehat P(t;z)\exp(\xi(t;z))
$$
is called the wave function of the KdV hierarchy. It satisfies the following equations:
\begin{align}
\frac{\d\psi}{\d t_n}&=\frac{u^{2n}}{(2n+1)!!}\left(L^{n+\frac{1}{2}}\right)_+\psi,\quad n\in\half\Z_{\ge 1};\label{eq:equations for the wave function}\\
L\psi&=z^2\psi.\label{eq:second equation for the wave function}
\end{align}


\subsection{Tau-function}

Let~$L$ be a solution of the Lax equations~\eqref{eq:Lax equations}. Then there exists a power series $\tau\in\mbC[u,u^{-1}][[t_0,t_1,\ldots]]$ such that $\left.\tau\right|_{t_*=0}=1$ and the power series 
$$
\frac{G_z(\tau)}{\tau}\in\mbC[u,u^{-1}][[z^{-1},t_0,t_1,\ldots]]
$$
is the symbol of the dressing operator for the operator $L$. The power series $\tau$ is called the tau-function of the KdV hierarchy. The tau-function has the following property:
$$
\frac{\d^2\ln\tau}{\d t_0^2}=w.
$$


\subsection{Witten's conjecture}

Let $L=\d_x^2+2 F^c_{0,0}$. Witten's conjecture says that the operator~$L$ is a solution of the Lax equations~\eqref{eq:Lax equations} and the exponent~$\exp(F^c)$ is the corresponding tau-function.


\section{Open KdV equations and the equations for the wave function of the KdV hierarchy}\label{section:open kdv and wave function}

In this section we show that the exponent $\exp(F^{o,ext})$ satisfies equations~\eqref{eq:equations for the wave function} for the wave function of the KdV hierarchy.

Let $w:=F^c_{0,0}$ and $L:=\d_x^2+2w$. By Witten's conjecture, the operator $L$ is a solution of the Lax equations~\eqref{eq:Lax equations} and the exponent~$\exp(F^c)$ is the corresponding tau-function. Recall that we have introduced the variables $t_n$ with $n\in\half+\Z_{\ge 0}$ and put $t_{k+\half}:=s_k$.

\begin{theorem}\label{theorem:open kdv and the wave function}
For any $n\in\frac{1}{2}\Z_{\ge 0}$ we have 
\begin{gather}\label{eq:open kdv and the wave function}
\frac{\d}{\d t_n}\exp(F^{o,ext})=\frac{u^{2n}}{(2n+1)!!}\left(L^{n+\frac{1}{2}}\right)_+\exp(F^{o,ext}).
\end{gather}
\end{theorem}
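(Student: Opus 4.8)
The plan is to prove \eqref{eq:open kdv and the wave function} by induction on $n\in\half\Z_{\ge 0}$, using only that $F:=F^{o,ext}$ satisfies \eqref{eq:s-equation}, \eqref{eq:first half} and \eqref{eq:second half} (the initial condition is not needed here). Set $\Psi:=\exp(F^{o,ext})$ and $w:=F^c_{0,0}$, so $L=\d_x^2+2w$, $\d_x\Psi=F_0\Psi$ and $\d_x^2\Psi=(F_{0,0}+F_0^2)\Psi$. There are two base cases. For $n=0$ one has $\left(L^{\half}\right)_+=\d_x=\frac{\d}{\d t_0}$, so \eqref{eq:open kdv and the wave function} is tautological. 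For $n=\half$ (i.e.\ $t_{\half}=s$) we have $(2n+1)!!=2$ and $\left(L^{1}\right)_+=L$, so the right-hand side of \eqref{eq:open kdv and the wave function} equals $\tfrac{u}{2}L\Psi=\tfrac{u}{2}(\d_x^2+2w)\Psi=u\bigl(\half F_{0,0}+\half F_0^2+w\bigr)\Psi$, which by \eqref{eq:s-equation} is exactly $F_s\Psi=\frac{\d}{\d s}\Psi$.

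For the inductive step fix $n$, with $n\ge 1$ if $n\in\Z$ and $n=k+\half$, $k\ge 1$, if $n\notin\Z$, and assume \eqref{eq:open kdv and the wave function} for $n-1$. The key elementary observation is that the second order operator on the right of \eqref{eq:first half} and \eqref{eq:second half}, namely $M:=\half\frac{\d^2}{\d t_0^2}+F_0\frac{\d}{\d t_0}+\half F_0^2+\half F_{0,0}+w$, is the conjugate $M=e^{-F}\circ\tfrac12 L\circ e^{F}$ (conjugation by $e^F$ sends $\d_x$ to $\d_x+F_0$). Hence $(Mg)\,e^{F}=\tfrac12 L(g\,e^{F})$ for any series $g$; applying this with $g$ the $t_{n-1}$-derivative of $F$ and using $g\,e^{F}=\frac{\d}{\d t_{n-1}}\Psi$, equations \eqref{eq:first half} and \eqref{eq:second half} both take the uniform form $\frac{2n+1}{u^2}\frac{\d}{\d t_n}\Psi=L\frac{\d}{\d t_{n-1}}\Psi+R_{n-1}\Psi$, where $R_{n-1}=F_0 F^c_{0,n-1}+\tfrac32 F^c_{0,0,n-1}$ if $n\in\Z$ and $R_{n-1}=0$ if $n\notin\Z$. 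Now substitute the inductive hypothesis for $\frac{\d}{\d t_{n-1}}\Psi$, use $(2n+1)!!=(2n+1)(2n-1)!!$, and note $L^{n+\frac12}=L\circ L^{n-\frac12}$ together with $\left(L^{n+\frac12}\right)_+=L\circ\left(L^{n-\frac12}\right)_++\bigl(L\circ\left(L^{n-\frac12}\right)_-\bigr)_+$ (the first summand being differential). The whole induction then collapses to the single identity
\[
\bigl(L\circ\left(L^{n-\frac12}\right)_-\bigr)_+\Psi=\frac{(2n-1)!!}{u^{2n-2}}\,R_{n-1}\,\Psi .
\]

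When $n\notin\Z$, $n-\half$ is a positive integer, $L^{n-\frac12}$ is a differential operator, $\left(L^{n-\frac12}\right)_-=0$, and both sides vanish. When $n\in\Z$, write $\left(L^{n-\frac12}\right)_-=b_1\d_x^{-1}+b_2\d_x^{-2}+\dots$, where $b_1,b_2$ are differential polynomials in $w$, hence depend only on $F^c$. Expanding $(\d_x^2+2w)\circ\left(L^{n-\frac12}\right)_-$ gives $\bigl(L\circ\left(L^{n-\frac12}\right)_-\bigr)_+=b_1\d_x+(2\d_x b_1+b_2)$, so applying it to $\Psi$ (and using $\d_x\Psi=F_0\Psi$) reduces the displayed identity to $b_1F_0+2\d_x b_1+b_2=\frac{(2n-1)!!}{u^{2n-2}}\bigl(F_0F^c_{0,n-1}+\tfrac32F^c_{0,0,n-1}\bigr)$, which follows at once from
\[
b_1=\frac{(2n-1)!!}{u^{2n-2}}F^c_{0,n-1},\qquad b_2=-\frac{(2n-1)!!}{2u^{2n-2}}F^c_{0,0,n-1}.
\]

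These two formulas for the leading coefficients of the pseudo-differential part of $L^{n-\frac12}$ (for the Kontsevich--Witten $L$) are the technical heart of the proof, and I would establish them --- the kind of statement I would relegate to the appendix --- using the wave function $\psi=\hP e^{\xi}$. From $L\psi=z^2\psi$ and the dressing one gets $L^{m/2}\psi=z^m\psi$; combining this with \eqref{eq:equations for the wave function} for $t_{n-1}$ and $\frac{\d}{\d t_{n-1}}\xi=\frac{u^{2n-2}z^{2n-1}}{(2n-1)!!}$ yields $\left(L^{n-\frac12}\right)_-\psi=-\frac{(2n-1)!!}{u^{2n-2}}\bigl(\tfrac{\d}{\d t_{n-1}}\hP\bigr)e^{\xi}$. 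Expanding the left side via $\d_x^{-j}(\hP e^{\xi})=\bigl(z^{-j}\hP+O(z^{-j-1})\bigr)e^{\xi}$ and comparing coefficients of $z^{-1}$ and $z^{-2}$ gives $b_1=-\frac{(2n-1)!!}{u^{2n-2}}\frac{\d p_1}{\d t_{n-1}}$ and $b_1p_1+b_2=-\frac{(2n-1)!!}{u^{2n-2}}\frac{\d p_2}{\d t_{n-1}}$, where $p_1,p_2$ are the first coefficients of $\hP=\frac{G_z(\tau^c)}{\tau^c}=\exp\bigl((G_z-1)F^c\bigr)$; reading off $p_1=-F^c_0$ and $p_2=\half\bigl(F^c_{0,0}+(F^c_0)^2\bigr)$ from this formula finishes the proof. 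I expect this last step --- showing that the correction terms on the right of \eqref{eq:first half} are precisely the ones forced by the KdV structure --- to be the only nonroutine point; everything else is bookkeeping with the conjugation $M=e^{-F}\circ\tfrac12 L\circ e^{F}$ and the recursion $L^{n+\frac12}=L\circ L^{n-\frac12}$.
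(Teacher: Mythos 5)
Your argument is correct, and its skeleton is the same as the paper's: induction on $n\in\half\Z_{\ge 0}$, the two base cases handled exactly as in the text, and the inductive step reduced to knowing the first two coefficients of $\left(L^{n-\half}\right)_-$, namely $b_1=\frac{(2n-1)!!}{u^{2n-2}}F^c_{0,n-1}$ and $b_2=-\frac{(2n-1)!!}{2u^{2n-2}}F^c_{0,0,n-1}$, which is precisely the content of the paper's appendix (the lemma giving $L^{n+\half}=\left(L^{n+\half}\right)_++\frac{(2n+1)!!}{u^{2n}}F_{0,n}\d_x^{-1}-\half\frac{(2n+1)!!}{u^{2n}}F_{0,0,n}\d_x^{-2}+\ldots$ and Lemma~\ref{lemma:recursion lemma}). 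Where you differ is in the bookkeeping and in how this key identity is proved. For the step you conjugate, $M=e^{-F}\circ\tfrac12 L\circ e^{F}$, which lets you treat \eqref{eq:first half} and \eqref{eq:second half} uniformly and work with $L\circ\left(L^{n-\half}\right)_+$; the paper instead uses \eqref{eq:open kdv} and \eqref{eq:s-equation}, rewrites the step through the mixed derivative $\frac{\d^2}{\d s\,\d t_{n-1}}\exp(F^{o,ext})$ (exploiting that $L$ is $s$-independent), and ends up with $\left(L^{n-\half}\right)_+\circ L$; the two decompositions of $\left(L^{n+\half}\right)_+$ are equivalent because the correction terms $b_1\d_x+2\d_xb_1+b_2$ versus $b_1\d_x+b_2$ differ exactly by the extra $2\d_x b_1$, consistent with your coefficient $\tfrac32 F^c_{0,0,n-1}$ versus the paper's $-\tfrac12 F^c_{0,0,n-1}$ in Lemma~\ref{lemma:recursion lemma}. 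For the coefficients themselves, you derive them from the wave-function/Sato--Wilson relation \eqref{eq:Sato-Wilson equations} (equivalently \eqref{eq:equations for the wave function}) together with the tau-function symbol $\hP=G_z(\tau^c)/\tau^c$, reading off $p_1=-F^c_0$, $p_2=\half\bigl(F^c_{0,0}+(F^c_0)^2\bigr)$; this is correct but is specific to an $L$ admitting a tau-function (which suffices here, by Witten's conjecture), whereas the paper's appendix proof is purely algebraic --- it uses only $[L^{n+\half},L]=0$ and the parity $\left(L^{\frac k2}\right)_+^*=(-1)^k\left(L^{\frac k2}\right)_+$ --- and so holds for an arbitrary solution of the Lax equations without invoking the dressing operator. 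Your route buys a uniform treatment of the integer and half-integer flows and a conceptual explanation of where the correction terms in \eqref{eq:first half} come from; the paper's buys a more elementary and more general technical lemma. One small presentational caveat: when you compare coefficients of $z^{-2}$ you should state explicitly the expansion $\d_x^{-1}(f e^{\xi})=\bigl(z^{-1}f-z^{-2}\d_x f+\ldots\bigr)e^{\xi}$ (or compose symbols as $\left(L^{n-\half}\right)_-\circ P$), since the $O(z^{-j-1})$ shorthand hides the term that cancels against $b_1p_1$; as you note, the cancellation works out and gives exactly $b_2=-\frac{(2n-1)!!}{2u^{2n-2}}F^c_{0,0,n-1}$.
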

\noindent Recall that we identify $x$ with $t_0$. 

\begin{proof}

The proof is by induction on $n$. Suppose $n$ is a half-integer, $n=k+\half$. If $k=0$, then, by~\eqref{eq:s-equation},
$$
\der{}{t_\half}F^{o,ext}=\der{}{s}F^{o,ext}=u\left(\half F^{o,ext}_{0,0}+\half (F^{o,ext}_0)^2+w\right).
$$
Therefore,
\begin{gather}\label{eq:s-equation for exp}
\der{}{t_{\half}}\exp(F^{o,ext})=\frac{u}{2}\left(\d_x^2+2w\right)\exp(F^{o,ext}).
\end{gather}
This is exactly equation~\eqref{eq:open kdv and the wave function} for $n=\half$.

Suppose~$k$ is a positive integer. We compute that
\begin{align*}
\frac{k+1}{u^2}\der{}{t_{k+\half}}&\exp(F^{o,ext})=\frac{k+1}{u^2}\der{}{s_k}\exp(F^{o,ext})=\frac{k+1}{u^2}F^{o,ext}_{s_k}\exp(F^{o,ext})\stackrel{\text{by \eqref{eq:second half}}}{=}\\
=&\left[\left(\frac{1}{2}\frac{\d^2}{\d t_0^2}+F^{o,ext}_0\frac{\d}{\d t_0}+\frac{1}{2}\left(F^{o,ext}_0\right)^2+\frac{1}{2}F^{o,ext}_{0,0}+w\right)F^{o,ext}_{s_{k-1}}\right]\exp(F^{o,ext})\stackrel{\text{by \eqref{eq:s-equation}}}{=}\\
=&u^{-1}\left(F^{o,ext}_{s,s_{k-1}}+F^{o,ext}_sF^{o,ext}_{s_{k-1}}\right)\exp(F^{o,ext})=u^{-1}\frac{\d^2}{\d s\d s_{k-1}}\exp(F^{o,ext})\stackrel{\substack{\text{by the induction}\\\text{assumption}}}{=}\\
=&u^{-1}\frac{\d}{\d s}\left[\frac{u^{2k-1}}{(2k)!!}L^k\exp(F^{o,ext})\right]\stackrel{\text{by~\eqref{eq:s-equation for exp}}}{=}\frac{u^{2k-1}}{2(2k)!!}L^{k+1}\exp(F^{o,ext}).
\end{align*}
Thus, equation~\eqref{eq:open kdv and the wave function} is proved for half-integers.

Suppose $n$ is an integer. For $n=0$ equation~\eqref{eq:open kdv and the wave function} is obvious. Suppose that $n\ge 1$. We have
\begin{align*}
\frac{2n+1}{2}\der{}{t_n}&\exp(F^{o,ext})=\frac{2n+1}{2}F^{o,ext}_n\exp(F^{o,ext})\stackrel{\text{by \eqref{eq:open kdv}}}{=}\\
=&\left[uF^{o,ext}_{s,n-1}+u F^{o,ext}_s F^{o,ext}_{n-1}+\frac{u^2}{2}F^c_{0,n-1}F^{o,ext}_0-\frac{u^2}{4}F^c_{0,0,n-1}\right]\exp(F^{o,ext})=\\
=&\left(u\frac{\d^2}{\d s\d t_{n-1}}+\frac{u^2}{2}F^c_{0,n-1}\frac{\d}{\d t_0}-\frac{u^2}{4}F^c_{0,0,n-1}\right)\exp(F^{o,ext})\stackrel{\substack{\text{by the induction}\\\text{assumption and \eqref{eq:s-equation for exp}}}}{=}\\
=&\left(\frac{u^{2n}}{2(2n-1)!!}(L^{n-\half})_+\circ L+\frac{u^2}{2}F^c_{0,n-1}\d_x-\frac{u^2}{4}F^c_{0,0,n-1}\right)\exp(F^{o,ext})\stackrel{\text{by Lemma~\ref{lemma:recursion lemma}}}{=}\\
=&\frac{u^{2n}}{2(2n-1)!!}(L^{n+\half})_+\exp(F^{o,ext}).
\end{align*}
The theorem is proved.
\end{proof}


\section{Proof of Theorem~\ref{theorem:main}}\label{section:main theorem}

In this section we prove Theorem~\ref{theorem:main} and show that $\cL_{-1}^{ext}\exp(F^{o,ext}+F^c)=0$.

Let $\tau^c:=\exp(F^c)$. By Witten's conjecture, $\tau^c$ is a tau-function of the KdV hierarchy. Therefore the series $\psi$, defined by 
\begin{gather*}
\psi:=\frac{G_z(\tau^c)}{\tau^c}\exp\left(\xi\right),
\end{gather*}
is the wave function of the KdV hierarchy. Let $w:=F^c_{0,0}$, $L:=\d_x^2+2w$ and 
$$
H(t;z):=G_z(\tau^c)\exp(\xi)=\tau^c\psi.
$$
The proof of Theorem~\ref{theorem:main} is based on the following proposition.
\begin{proposition}\label{proposition:main proposition}
We have $\cL^{ext}_{-1}\Coef_{z^0}(D\cdot H)=0$.
\end{proposition}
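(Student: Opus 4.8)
The plan is to verify Proposition~\ref{proposition:main proposition} by a direct computation of how the string-type operator $\cL^{ext}_{-1}$ acts on $\Coef_{z^0}(D\cdot H)$, exploiting that $H=\tau^c\psi$ where $\psi$ is the wave function. First I would record the two facts I am allowed to use: the closed Virasoro/string equation $L_{-1}\tau^c=0$ from~\eqref{eq:closed virasoro}, and the fact that $\psi=\widehat P\exp(\xi)$ satisfies the linear equations~\eqref{eq:equations for the wave function}--\eqref{eq:second equation for the wave function}. It is cleanest to split $\cL^{ext}_{-1}$ into the piece $L_{-1}$ (which contains the $t$-derivatives, the quadratic $\frac{t_0^2}{2u^2}$ term, and the string shift $-\frac{\d}{\d t_1}$-type contribution via the $-\delta_{i,1}$) plus the $s$-part $\sum_{i\ge 1}i\, s_i\frac{\d}{\d s_{i-1}} + u^{-1}s$ (note the $\frac{3(n+1)!}{4}u\frac{\d}{\d s_{n-1}}$ term vanishes at $n=-1$ by the convention $\frac{\d}{\d s_{-2}}:=0$, and the $\delta_{n,0}$ term is absent). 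So the claim reduces to showing
\begin{gather*}
\left(L_{-1}+\sum_{i\ge 1}i\, s_i\frac{\d}{\d s_{i-1}}+u^{-1}s\right)\Coef_{z^0}(D\cdot H)=0.
\end{gather*}

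Next I would push the operator through $\Coef_{z^0}$ and through the factor $D(z)$, which is a constant in the $t$'s and $s$'s, so the only subtlety is the interplay between $z$-powers and the $\Coef_{z^0}$ projection. The heart of the matter is to compute $L_{-1}$ and the $s$-shift operator on $H=\tau^c\psi = G_z(\tau^c)\exp(\xi)$. For the $t$-variable part: the Virasoro operator $L_{-1}$ is a derivation plus a multiplication term, so $L_{-1}(\tau^c\psi)=(L_{-1}\tau^c)\psi + \tau^c(\mathcal{D}_{-1}\psi) + (\text{cross terms from the second-order/quadratic pieces})$, where $\mathcal{D}_{-1}$ is the first-order part of $L_{-1}$; using $L_{-1}\tau^c=0$ kills the first term. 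The key classical fact I would invoke (or re-derive from~\eqref{eq:equations for the wave function}--\eqref{eq:second equation for the wave function}, taking $n=\tfrac12$ and the $t_0$-translation built into $\xi$) is that the wave function satisfies a string equation of its own: roughly, $\sum_{i\ge 0}\tilde t_i\frac{\d\psi}{\d t_{i-1}} + (\text{shift in }z) \psi = (\text{something})\cdot\psi$, i.e. the generating-function identity that the combined operator acting on $\psi$ amounts to multiplication by a power of $z$ or differentiation $\frac{\d}{\d z}$. Concretely I expect the identity to take the shape that $L_{-1}$ plus the $s$-shift plus $u^{-1}s$, when applied to $H$, equals $z\,\frac{\d}{\d z}$ (or $-\frac{\d}{\d z}$, up to normalization) applied to $H$, because $\xi$ is designed so that $\frac{\d\xi}{\d z}$ reproduces exactly the weighted sum $\sum (2i+1)$-type coefficients appearing in $L_{-1}$ and in $\sum i s_i \d_{s_{i-1}}$. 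Taking $\Coef_{z^0}$ of a total $z$-derivative then gives zero — this is the mechanism that proves the Proposition.

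The main obstacle, and where I would spend the most care, is matching the coefficients precisely so that the combination $L_{-1}+\sum_{i\ge1}i s_i\d_{s_{i-1}}+u^{-1}s$ acting on $H$ is \emph{exactly} a total derivative in $z$ after accounting for the $G_z$-shift acting on $\tau^c$. The operator $G_z$ shifts $t_i \mapsto t_i - k_i/(u^{2i}z^{2i+1})$ with $k_i=(2i-1)!!$, so $\frac{\d}{\d z}G_z(\tau^c)$ produces terms $\sum_i \frac{(2i+1)k_i}{u^{2i}z^{2i+2}}\frac{\d}{\d t_i}G_z(\tau^c)$, and these must be reconciled against the $L_{-1}\tau^c=0$ relation rewritten after the shift (i.e. one uses that $L_{-1}$ is translation-covariant: $G_z \circ L_{-1}\circ G_z^{-1}$ differs from $L_{-1}$ by an explicit first-order operator coming from the shift). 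Simultaneously the $\exp(\xi)$ factor contributes $\frac{\d\xi}{\d z}\exp(\xi)$ and the $t_0^2/(2u^2)$ and second-order terms of $L_{-1}$ contribute lower-order-in-$z$ junk that must cancel against the $u^{-1}s$ term and the leading part of the shift. The bookkeeping here is the genuine content; I would organize it by first conjugating $L_{-1}$ by $G_z$ to get a clean operator identity $G_z\circ L_{-1}\circ G_z^{-1} = L_{-1} + (\text{explicit shift operator in }z)$, then combine with the wave-function equations, and finally check term-by-term in powers of $z$ that everything collapses to $\frac{\d}{\d z}(D\cdot H) \cdot z$ (plus possibly a correction absorbed by $D$ itself — which is presumably why $D$ is defined via the $a_n, d_n$: the Laurent series $D$ is exactly the integrating factor making $L_{-1}^{ext}(D\cdot H)$ a pure $z$-derivative). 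Once that reduction is in hand, applying $\Coef_{z^0}$ finishes the proof since the $z^0$-coefficient of $\frac{\d}{\d z}(\text{Laurent}\times\text{power series})$ of the appropriate homogeneity vanishes.
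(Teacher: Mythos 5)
Your opening step does coincide with the paper's: writing out $\cL^{ext}_{-1}H$ and using the string equation $L_{-1}\tau^c=0$, one gets $\cL^{ext}_{-1}H=\tau^c\, S_z\psi$ with $S_z=\frac{1}{u^2z}\der{}{z}-\frac{1}{2u^2z^2}-z$ (this is~\eqref{eq:tmp4}), so the claim reduces to $\Coef_{z^0}(D\cdot S_z\psi)=0$. But the mechanism you propose for this vanishing --- that $D$ is an ``integrating factor'' turning $D\cdot\cL^{ext}_{-1}H$ into a total $z$-derivative, which is then killed by $\Coef_{z^0}$ --- is a genuine gap, for two reasons. First, $\Coef_{z^0}$ does not annihilate total $z$-derivatives; it is the residue $\Coef_{z^{-1}}$ that does, so even if you produced such a rewriting it would not finish the proof. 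Second, and more fundamentally, $S_z$ is an honest first-order operator in $z$ and $D\cdot S_z\psi$ is not a total derivative of anything reasonable; the identity $\Coef_{z^0}(D\cdot S_z\psi)=0$ is not a formal consequence of the string equation and cannot be verified ``term-by-term in powers of $z$'' at generic times, because it uses specific analytic input about the Witten tau-function beyond $L_{-1}\tau^c=0$.

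What the actual argument requires, and what is missing from your proposal, is the following chain. Since $D$ and $S_z$ act only on $z$ and the wave-function equations~\eqref{eq:equations for the wave function} are linear with $z$-independent coefficients, the series $\Coef_{z^0}(D\cdot S_z\psi)$ satisfies the system~\eqref{eq:equations for F}; hence it suffices to prove its vanishing on the slice $t_{>0}=0$. On that slice one sets $y=z^{-1}$, $f=\psi|_{t_{>0}=0}$, and the decisive inputs are: the evaluation $f|_{x=0}=A(y)$, i.e.\ the Airy-type asymptotic series (Lemma~\ref{lemma:specialization}, which itself uses $L\psi=z^2\psi$ and Lemma~\ref{lemma:main property of A}); the relation $\frac{\d f}{\d x}=S_yf$ (Lemma~\ref{lemma:dx-derivative}); integration by parts in $y$ (Lemma~\ref{lemma:move the operator}); the identity $\hS_yD=\frac{A(-y)}{y}$, which is the \emph{only} place the coefficients $d_n$ enter (Lemma~\ref{lemma:equation for D}); the parity of the Laurent polynomials $Q_n,R_n$ (Lemma~\ref{lemma:QR coefficients}); and the nontrivial quadratic identity $A(y)B(-y)+A(-y)B(y)=2$ quoted from~\cite{PPZ13}. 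Your proposal contains none of these ingredients --- in particular no use of the evaluation of $\psi$ at $t_{>0}=0$ and no substitute for the $A$--$B$ identity --- and the conjugation/bookkeeping in powers of $z$ that you outline cannot produce them, so the argument does not close.
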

\begin{proof}
It is easy to compute that
\begin{align*}
\cL_{-1}^{ext}H=&G_z(L_{-1}\tau^c)\exp(\xi)+\\
&+\left(\sum_{i\ge 0}\frac{(2i+1)!!}{u^{2i+2}z^{2i+3}}G_z(F^c_i)+\sum_{i\ge 0}t_i\frac{u^{2i-2}z^{2i-1}}{(2i-1)!!}+\sum_{i\ge 0}s_i\frac{u^{2i-1}z^{2i}}{2^i i!}-\frac{1}{2u^2z^2}-z\right)H,\\
\frac{1}{u^2 z}\der{}{z}H=&\left(\sum_{i\ge 0}\frac{(2i+1)!!}{u^{2i+2}z^{2i+3}}G_z(F^c_i)+\sum_{i\ge 0}t_i\frac{u^{2i-2}z^{2i-1}}{(2i-1)!!}+\sum_{i\ge 0}s_i\frac{u^{2i-1}z^{2i}}{2^i i!}\right)H.
\end{align*}
Since~$L_{-1}\tau^c=0$, we get
\begin{gather}\label{eq:tmp4}
\frac{\cL_{-1}^{ext}H}{\tau^c}=(\tau^c)^{-1}\left(\frac{1}{u^2 z}\der{}{z}-\frac{1}{2u^2z^2}-z\right)H=\left(\frac{1}{u^2 z}\der{}{z}-\frac{1}{2u^2z^2}-z\right)\psi.
\end{gather}
Denote the operator $\frac{1}{u^2 z}\der{}{z}-\frac{1}{2u^2z^2}-z$ by $S_z$. We see that we have to prove that
\begin{gather}\label{eq:zero equality}
\Coef_{z^0}(D\cdot S_z\psi)=0.
\end{gather}

Let $\mathcal F\in\mbC[u,u^{-1}][[t_0,t_\half,t_1,t_{\frac{3}{2}},\ldots]]$. Consider the following system:
\begin{gather}\label{eq:equations for F}
\frac{\d\mathcal F}{\d t_n}=\frac{u^{2n}}{(2n+1)!!}\left(L^{n+\half}\right)_+\mathcal F,\quad n\in\half\Z_{\ge 1}.
\end{gather}
From \eqref{eq:equations for the wave function} it follows that the series $\Coef_{z^0}(D\cdot S_z\psi)$ is a solution of this system. Therefore, in order to prove~\eqref{eq:zero equality}, it is sufficient to prove that
$$
\left.\Coef_{z^0}(D\cdot S_z\psi)\right|_{t_{>0}=0}=0.
$$
Let $y:=z^{-1}$ and $f(x,y):=\left.\psi\right|_{\substack{t_{>0}=0\\z=y^{-1}}}$. Since $\left.F^c\right|_{t_{>0}=0}=\frac{x^3}{6u^2}$, we have 
$$
f(x,y)=\frac{\exp\left(F^c\left(x-y,-\frac{y^3}{u^2},-\frac{3!!y^5}{u^4},\ldots\right)\right)}{\exp\left(\frac{x^3}{6u^2}\right)}\exp\left(\frac{x}{y}\right).
$$
Denote by $S_y$ the operator $-S_z$, rewritten in the $y$-coordinate. We have
$$
S_y=\frac{y^3}{u^2}\d_y+\frac{y^2}{2u^2}+\frac{1}{y}.
$$
We have to prove that 
\begin{gather}\label{eq:zero equality2}
\Coef_{y^0}(D\cdot S_y f)=0.
\end{gather}
In the next three lemmas we describe several properties of the series $f$.
\begin{lemma}\label{lemma:homogeneity}
The series $f$ satisfies the homogeneity condition $\left(x\der{}{x}+y\der{}{y}+\frac{3}{2}u\der{}{u}\right)f=0$.
\end{lemma}
\begin{proof}
Let $O:=x\der{}{x}+y\der{}{y}+\frac{3}{2}u\der{}{u}$. A correlator $\<\tau_{d_1}\ldots\tau_{d_n}\>^c_g$ can be non-zero, only if $d_1+\ldots+d_n=3g-3+n$. Therefore we have 
$$
\left(\sum_{i\ge 0}(1-i)t_i\der{}{t_i}+\frac{3}{2}u\der{}{u}\right)F^c=0.
$$
This equation easily implies that $O F^c(x-y,-\frac{y^3}{u^2},-\frac{3!!y^5}{u^4},\ldots)=0$. Since $O(\frac{x^3}{6u^2})=O(\frac{x}{y})=0$, we get $O f=0$.
\end{proof}

\begin{lemma}\label{lemma:dx-derivative}
We have $\frac{\d f}{\d x}=S_y f$.
\end{lemma}
\begin{proof}
By~\eqref{eq:tmp4}, $(\tau^c)^{-1}\cL_{-1}^{ext}H=S_z\psi$. It is easy to see that 
\begin{multline*}
\left.\left((\tau^c)^{-1}\cL_{-1}^{ext}H\right)\right|_{t_{>0}=0}=e^{-\frac{t_0^3}{6u^2}}\left(-\frac{\d}{\d t_0}+\frac{t_0^2}{2u^2}\right)\left(\left.H\right|_{t_{>0}=0}\right)=\\
=-\frac{\d}{\d t_0}\left(e^{-\frac{t_0^3}{6u^2}}\left.H\right|_{t_{>0}=0}\right)=-\left.\left(\frac{\d\psi}{\d t_0}\right)\right|_{t_{>0}=0}.
\end{multline*}
Therefore, $\frac{\d f}{\d x}=S_y f$.
\end{proof}

Now we want to compute $\left.f\right|_{x=0}$. Recall that the numbers $a_n$, $n\ge 0$, are defined by
$$
a_n:=(-1)^n\frac{(6n)!}{288^n(2n)!(3n)!}.
$$
Introduce the power series $A(y)$ by
$$
A(y):=1+\sum_{i\ge 1}a_i\frac{y^{3i}}{u^{2i}}.
$$

\begin{lemma}\label{lemma:specialization}
We have $\left.f\right|_{x=0}=A(y)$.
\end{lemma}
\begin{proof}
It is possible to get the proof from the results of \cite{Kon92} or \cite{KS91}, but we decide to present a simple direct argument here. First of all, from Lemma~\ref{lemma:homogeneity} it follows that the series~$\left.f\right|_{x=0}$ has the form 
$$
\left.f\right|_{x=0}=1+\sum_{i\ge 1}f_i\frac{y^{3i}}{u^{2i}},
$$
where $f_i$ are some complex coefficients. From Lemma~\ref{lemma:dx-derivative} it follows that $\frac{\d^2}{\d x^2}f=S_y^2 f$. By~\eqref{eq:second equation for the wave function}, 
$$
(\d_x^2+2x)f=\frac{1}{y^2}f.
$$
So we get 
$$
S_y^2\left(\left.f\right|_{x=0}\right)=\frac{1}{y^2}\left(\left.f\right|_{x=0}\right).
$$
From Lemma~\ref{lemma:main property of A} it follows that $\left.f\right|_{x=0}=A(y)$. 
\end{proof}

In order to prove equation~\eqref{eq:zero equality2}, we have to prove that, for any $n\ge 0$, we have
$$
\left.\Coef_{y^0}\left(D\cdot\der{^n}{x^n}S_y f\right)\right|_{x=0}=0.
$$
By Lemmas \ref{lemma:dx-derivative} and \ref{lemma:specialization}, we have
$$
\left.\Coef_{y^0}\left(D\cdot\der{^n}{x^n}S_y f\right)\right|_{x=0}=\Coef_{y^0}(D\cdot S_y^{n+1}A(y)).
$$
Let $\hS_y:=-\frac{y^3}{u^2}\d_y-\frac{3y^2}{2u^2}+\frac{1}{y}$.
\begin{lemma}\label{lemma:move the operator}
For any two Laurent series $p,q\in\mbC[u,u^{-1}][y^{-1},y]]$ we have
$$
\Coef_{y^0}\left(p\cdot S_y q\right)=\Coef_{y^0}\left(\hS_y p\cdot q\right).
$$
\end{lemma}
\begin{proof}
We have
\begin{multline*}
\Coef_{y^0}\left(p\cdot\frac{y^3}{u^2}\der{q}{y}\right)=\Coef_{y^{-1}}\left(p\cdot\frac{y^2}{u^2}\der{q}{y}\right)=\\
=\Coef_{y^{-1}}\left(\der{}{y}\left(p\cdot\frac{y^2}{u^2}q\right)\right)-\Coef_{y^{-1}}\left(\der{}{y}\left(\frac{y^2}{u^2}p\right)\cdot q\right)=\Coef_{y^0}\left(\left(-\frac{y^3}{u^2}\der{}{y}-\frac{2y^2}{u^2}\right)p\cdot q\right).
\end{multline*}
The lemma is clear now.
\end{proof}
Using this lemma we get
$$
\Coef_{y^0}(D\cdot S_y^{n+1}A(y))=\Coef_{y^0}(\hS_yD\cdot S_y^n A(y))\stackrel{\text{by Lemma~\ref{lemma:equation for D}}}{=}\Coef_{y^0}\left(\frac{A(-y)}{y}\cdot S_y^n A(y)\right).
$$
Define the power series $B(y)$ by $B(y):=yS_y A(y)$. It is easy to see that
$$
B(y)=1+\sum_{i\ge 1}b_i\frac{y^{3i}}{u^{2i}},\quad\text{where}\quad b_i=-\frac{6i+1}{6i-1}a_i.
$$
By Lemma~\ref{lemma:QR coefficients}, we have
\begin{gather}\label{eq:tmp equation}
\Coef_{y^0}\left(\frac{A(-y)}{y}\cdot S_y^n A(y)\right)=\Coef_{y^0}\left(\frac{Q_n}{y}A(-y)A(y)+\frac{R_n}{y^2}A(-y)B(y)\right),
\end{gather}
where $Q_n,R_n\in\mbC[u^{-1},y^{-1}]$ are even Laurent polynomials in $y$. Since $A(-y)A(y)$ is even, we get 
$\Coef_{y^0}\left(\frac{Q_n}{y}A(-y)A(y)\right)=0$. We also have the following identity (see e.g.~\cite[page 36]{PPZ13}): 
$$
A(y)B(-y)+A(-y)B(y)=2.
$$
Therefore, $\Coef_{y^0}\left(\frac{R_n}{y^2}A(-y)B(y)\right)=0$. We conclude that the right-hand side of \eqref{eq:tmp equation} is equal to zero. The proposition is proved.
\end{proof}

Let us prove Theorem~\ref{theorem:main}. By equation~\eqref{eq:equations for the wave function} and Theorem~\ref{theorem:open kdv and the wave function}, the both series $\Coef_{z^0}(D\cdot\psi)$ and $\exp(F^{o,ext})$ are solutions of system~\eqref{eq:equations for F}. Therefore it is sufficient to check that 
$$
\left.\Coef_{z^0}(D\cdot\psi)\right|_{t_{>0}=0}=\left.\exp(F^{o,ext})\right|_{t_{>0}}.
$$
Since the potential $F^{o,ext}$ satisfies the initial condition~\eqref{eq:extended initial condition}, it remains to prove that 
$$
\left.\Coef_{z^0}(D\cdot\psi)\right|_{t_{>0}=0}=1.
$$
We can easily see that $\left.\Coef_{z^0}(D\cdot\psi)\right|_{t_*=0}=1$. Proposition~\ref{proposition:main proposition} and the string equation~$L_{-1}\tau^c=0$ imply that
$$
\left(-\frac{\d}{\d t_0}+\sum_{n\in\half\Z_{\ge 0}}t_{n+1}\der{}{t_n}+u^{-1}t_\half\right)\Coef_{z^0}(D\cdot\psi)=0.
$$
Thus, $\frac{\d}{\d t_0}\left(\left.\Coef_{z^0}(D\cdot\psi)\right|_{t_{>0}=0}\right)=0$, and the theorem is proved.

Obviously, Proposition~\ref{proposition:main proposition} and Theorem~\ref{theorem:main} imply the following corollary.
\begin{corollary}\label{corollary:extended open string}
We have $\cL_{-1}^{ext}\exp(F^{o,ext}+F^c)=0$.
\end{corollary}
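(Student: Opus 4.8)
The plan is to deduce the corollary directly from Proposition~\ref{proposition:main proposition} together with Theorem~\ref{theorem:main}, with no further computation. The mechanism is that $\cL_{-1}^{ext}$ is a differential operator in the variables $t_*$ and $s_*$ only --- its coefficients involve $u$ but not the spectral parameter $z$ --- so it commutes with the operation $\Coef_{z^0}$ of extracting the $z^0$-coefficient; and $\tau^c=\exp(F^c)$ is likewise independent of $z$.

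First I would recall from the proof of Theorem~\ref{theorem:main} the series $H=G_z(\tau^c)\exp(\xi)=\tau^c\psi$, where $\psi=\frac{G_z(\tau^c)}{\tau^c}\exp(\xi)$ is the wave function of the KdV hierarchy. Since $\tau^c$ does not involve $z$, it passes through the coefficient extraction:
\[
\Coef_{z^0}(D\cdot H)=\tau^c\cdot\Coef_{z^0}(D\cdot\psi)=\exp(F^c)\cdot\exp(F^{o,ext})=\exp(F^{o,ext}+F^c),
\]
where the middle equality is precisely Theorem~\ref{theorem:main}, read in the form $\exp(F^{o,ext})=\Coef_{z^0}(D\cdot\psi)$. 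Now Proposition~\ref{proposition:main proposition} states $\cL_{-1}^{ext}\Coef_{z^0}(D\cdot H)=0$; substituting the identity above yields $\cL_{-1}^{ext}\exp(F^{o,ext}+F^c)=0$, which is the claim.

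I do not expect any genuine obstacle at this stage: all of the content of the corollary is already packaged inside Proposition~\ref{proposition:main proposition}, whose proof --- the string-equation identity $(\tau^c)^{-1}\cL_{-1}^{ext}H=S_z\psi$, combined with the vanishing of $\Coef_{z^0}(D\cdot S_y^{n+1}A(y))$ via the parity of $A(y)A(-y)$ and the relation $A(y)B(-y)+A(-y)B(y)=2$ --- is the nontrivial input. The one point worth a remark is that the products $D\cdot H$ and $D\cdot\psi$, as well as the operation $\Coef_{z^0}$, are well-defined in the sense explained after the statement of Theorem~\ref{theorem:main}, using the grading with $\deg t_i=2i+1$ and $\deg s_i=2i+2$; once that is in place, the manipulation above is purely formal.
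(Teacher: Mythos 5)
Your argument is correct and is exactly the paper's route: the paper derives the corollary directly from Proposition~\ref{proposition:main proposition} together with Theorem~\ref{theorem:main}, using $\Coef_{z^0}(D\cdot H)=\tau^c\,\Coef_{z^0}(D\cdot\psi)=\exp(F^{o,ext}+F^c)$ since $\tau^c$ is independent of $z$. Nothing is missing.
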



\section{Extended open Virasoro equations}\label{section:virasoro equations}

In this section we prove Theorem~\ref{theorem:virasoro} and show that the open Virasoro equations~\eqref{eq:open Virasoro equations} can be easily derived from it.

\subsection{Proof of Theorem~\ref{theorem:virasoro}}

By Corollary~\ref{corollary:extended open string}, we already know that $\cL_{-1}^{ext}\exp(F^{o,ext}+F^c)=0$. By Theorem~\ref{theorem:main}, we have $\exp(F^{o,ext}+F^c)=\Coef_{z^0}(D\cdot H)$. A direct computation shows that, for any $n\ge -1$, we have
\begin{gather*}
\cL_n^{ext}H=G_z(L_n\tau^c)\exp(\xi)+\frac{u^{2n+2}z^{2n+2}}{2^{n+1}}\left(\cL^{ext}_{-1}H-G_z(L_{-1}\tau^c)\exp(\xi)\right)+\frac{n+1}{2^n}u^{2n}z^{2n}H.
\end{gather*}
By equation~\eqref{eq:second equation for the wave function}, $L\psi=z^2\psi$. Using also the closed Virasoro equations~\eqref{eq:closed virasoro}, we get
\begin{gather}
\cL_n^{ext}H=\frac{u^{2n+2}}{2^{n+1}}\left(\cL_{-1}^{ext}\circ\tau^c\circ L^{n+1}\circ(\tau^c)^{-1}\right)H+\frac{u^{2n}(n+1)}{2^n}\left(\tau^c\circ L^n\circ(\tau^c)^{-1}\right)H\label{eq:tmp1}.
\end{gather}
Let $\tcL_{-1}^{ext}:=\cL_{-1}^{ext}-\frac{x^2}{2u^2}$. From the string equation~$L_{-1}\tau^c=0$ it follows that
\begin{gather*}
\cL_{-1}^{ext}\circ\tau^c=\tau^c\circ\tcL_{-1}^{ext},\qquad \left[\tcL_{-1}^{ext},L\right]=-\frac{2}{u^2}.
\end{gather*}
Combining these equations with equation~\eqref{eq:tmp1}, we get
\begin{align*}
\cL_n^{ext}H=&\frac{u^{2n+2}}{2^{n+1}}\left(\tau^c\circ\tcL_{-1}^{ext}\circ L^{n+1}\circ(\tau^c)^{-1}\right)H+\frac{u^{2n}(n+1)}{2^n}\left(\tau^c\circ L^n\circ(\tau^c)^{-1}\right)H=\\
=&\frac{u^{2n+2}}{2^{n+1}}\left(\tau^c\circ L^{n+1}\circ\tcL_{-1}^{ext}\circ(\tau^c)^{-1}\right)H=\frac{u^{2n+2}}{2^{n+1}}\left(\tau^c\circ L^{n+1}\circ(\tau^c)^{-1}\right)\cL_{-1}^{ext}H.
\end{align*}
Since $\exp(F^{o,ext}+F^c)=\Coef_{z^0}(D\cdot H)$, we obtain
$$
\cL_n^{ext}\exp(F^{o,ext}+F^c)=\frac{u^{2n+2}}{2^{n+1}}\left(\tau^c\circ L^{n+1}\circ(\tau^c)^{-1}\right)\cL_{-1}^{ext}\exp(F^{o,ext}+F^c)=0.
$$
The extended open Virasoro equations are proved.


\subsection{Open Virasoro equations from the extended open Virasoro equations}

Let us derive the open Virasoro equations~\eqref{eq:open Virasoro equations} from the extended open Virasoro equations~\eqref{eq:new virasoro}. We have $F^o=\left.F^{o,ext}\right|_{s_{\ge 1}=0}$. From Theorem~\ref{theorem:open kdv and the wave function} it follows that
$$
\frac{\d}{\d s_n}\exp(F^{o,ext})=\frac{u^n}{(n+1)!}\frac{\d^{n+1}}{\d s^{n+1}}\exp(F^{o,ext}).
$$
We see that
$$
\left.\left(\cL^{ext}_n\exp(F^{o,ext}+F^c)\right)\right|_{s_{\ge 1}=0}=\left.\left(\cL_n\exp(F^{o,ext}+F^c)\right)\right|_{s_{\ge 1}=0}=\cL_n\exp(F^o+F^c).
$$
Theorem~\ref{theorem:virasoro} implies that $\cL_n\exp(F^o+F^c)=0$. The open Virasoro equations are proved.


{
\appendix

\section{Technical lemmas}\label{section:technical lemmas}

In this section we collect several technical facts and lemmas that we used in Sections~\ref{section:open kdv and wave function} and~\ref{section:main theorem}. In Section~\ref{subsection:appendix,tau-function} we list some properties of a tau-function of the KdV hierarchy. Section~\ref{subsection:A,B and D} is devoted to properties of the power series $A,B$ and $D$.

\subsection{Tau-function}\label{subsection:appendix,tau-function}

Let $L=\d_x^2+2w$ be a solution of the Lax equations~\eqref{eq:Lax equations} and~$\tau$ be the tau-function. Denote the logarithm $\ln\tau$ by $F$.

\subsubsection{Second derivatives $F_{i,j}$}

Suppose $f$ is a power series in $t_0,t_1,\ldots$ with the coefficients from $\mbC[u,u^{-1}]$. We will say that the power series $f$ is a differential polynomial in $w$, if it can be expressed as a polynomial in $w,w_x,w_{xx},\ldots$ with the coefficients from $\mbC[u,u^{-1}]$. We will sometimes denote $\d_x^i w$ by $w_i$.

\begin{lemma}[See e.g. \cite{Dic03}]
For any $i,j\ge 0$ there exists a universal polynomial $P_{i,j}\in\mbC[u,u^{-1}][v_0,v_1,v_2,\ldots]$ in formal variables $v_0,v_1,v_2,\ldots$ such that 
$$
F_{i,j}=\left.P_{i,j}\right|_{v_k=w_k}.
$$
The polynomial $P_{i,j}$ is universal in the sense, that it doesn't depend on a solution $L$ of the Lax equations and on the choice of a $\tau$-function. The polynomials $P_{i,j}$ have the property
\begin{gather}\label{eq:zero}
\left.P_{i,j}\right|_{v_*=0}=0.
\end{gather}
\end{lemma}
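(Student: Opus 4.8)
The plan is to read off the second logarithmic derivatives $F_{i,j}$ directly from the wave function $\psi=\hP(t;z)\exp(\xi(t;z))$ of our solution, exploiting \emph{both} of its defining equations~\eqref{eq:equations for the wave function}--\eqref{eq:second equation for the wave function} together with the identity $\log\hP=G_z(F)-F$ (which holds because $\hP=G_z(\tau)/\tau$ and $\tau=e^F$). This is essentially the argument recorded in~\cite{Dic03}. Since $\tau$, hence $F$, depends only on the integer times $t_0,t_1,\dots$, it suffices to treat integer $i,j$. Write $\mcA:=\mbC[u,u^{-1}][w,w_1,w_2,\dots]$ for the ring of differential polynomials in $w$ and $\mcA_0\subset\mcA$ for the ideal of those vanishing at $w_*=0$; note that an $x$-derivative of any element of $\mcA$ lies in $\mcA_0$, and that, using the classical fact (see~\cite{Dic03}) that $\d_{t_n}w=\d_x R_n$ for some $R_n\in\mcA$, the flow operator $\d_{t_n}$ maps $\mcA$ into $\mcA_0$. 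The lemma amounts to the assertion $F_{i,j}\in\mcA_0$, expressed by a universal polynomial in the $w_k$.

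First I would use $L\psi=z^2\psi$, i.e.\ $(\d_x^2+2w)\psi=z^2\psi$: the logarithmic derivative $v:=\d_x\log\psi$ then satisfies the Riccati equation $v_x+v^2+2w=z^2$. Because $\d_x\xi=z$ while $\d_x\log\hP$ is a power series in $z^{-1}$, we have $v=z+\sum_{l\ge 1}v_lz^{-l}$, and the Riccati equation gives $v_1=-w$, $v_{l+1}=-\half\bigl(\d_x v_l+\sum_{a+b=l}v_av_b\bigr)$, so each $v_l$ is a universal element of $\mcA_0$. Hence each $\d_x^k\psi/\psi$ equals a fixed universal polynomial $B_k(v,v_x,v_{xx},\dots)$ (with $B_0=1$, $B_{k+1}=\d_x B_k+vB_k$), and substituting the expansion of $v$ turns it into a Laurent series in $z$ with leading coefficient $z^k$ and all remaining coefficients in $\mcA_0$. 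Since the coefficients of $(L^{j+\half})_+$ are differential polynomials in $w$ with leading coefficient $1$ and all lower ones in $\mcA_0$ (they reduce to those of $\d_x^{2j+1}$ when $w_*=0$), the Laurent series $\tfrac{u^{2j}}{(2j+1)!!}\bigl((L^{j+\half})_+\psi/\psi-z^{2j+1}\bigr)$ has every coefficient in $\mcA_0$. By~\eqref{eq:equations for the wave function} this series equals $\d_{t_j}\log\hP=(e^{-\theta_z}-1)F_j$, where $\theta_z:=\sum_{m\ge 0}\tfrac{(2m-1)!!}{u^{2m}z^{2m+1}}\d_{t_m}$; being manifestly a power series in $z^{-1}$, it therefore has $\Coef_{z^{-l}}\bigl((e^{-\theta_z}-1)F_j\bigr)\in\mcA_0$ for all $l\ge 1$.

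It then remains to solve this identity for the second derivatives. Expanding $e^{-\theta_z}-1=\sum_{r\ge 1}\tfrac{(-1)^r}{r!}\theta_z^r$, the coefficient of $z^{-(2m+1)}$ in $(e^{-\theta_z}-1)F_j$ is $-\tfrac{(2m-1)!!}{u^{2m}}F_{m,j}$ plus a $\mbC[u,u^{-1}]$-linear combination of higher derivatives $F_{m_1,\dots,m_r,j}$ with $r\ge 3$ (necessarily odd) and $m_1+\cdots+m_r\le m-1$. I would fix $j$ and induct on $m$: for $m=0$ the identity reads $-F_{0,j}\in\mcA_0$; for $m\ge 1$, each higher-derivative term equals $\d_{t_{m_1}}\cdots\d_{t_{m_{r-1}}}F_{m_r,j}$ with $m_r\le m-1$, hence $F_{m_r,j}\in\mcA_0$ by the inductive hypothesis, and applying the flow operators $\d_{t_{m_i}}$ (which map $\mcA$ into $\mcA_0$) keeps it in $\mcA_0$, so solving for $F_{m,j}$ gives $F_{m,j}\in\mcA_0$. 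All ingredients (the Riccati recursion, the polynomials $B_k$, the coefficients of $(L^{j+\half})_+$, the shift operator $e^{-\theta_z}$, and the flow operators) are independent of the chosen solution $L$ and tau-function, so the resulting $P_{i,j}$ are universal, and $P_{i,j}|_{v_*=0}=0$ because $\mcA_0$ is exactly the vanishing ideal. The point requiring the most care is this last step: one must check that in the coefficient of $z^{-(2m+1)}$ every higher derivative appearing besides $F_{m,j}$ involves only $t$-indices strictly below $m$, which is precisely what lets the recursion close with no integration constants.
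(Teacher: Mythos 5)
Your argument is correct, and there is nothing in the paper to compare it against: the lemma is stated with the tag ``See e.g.\ \cite{Dic03}'' and no proof is given, so your self-contained derivation is necessarily its own route. It is the classical one, and the steps check out. From $L\psi=z^2\psi$ the logarithmic derivative $v=\d_x\log\psi=z+\sum_{l\ge1}v_lz^{-l}$ does satisfy the Riccati recursion $v_1=-w$, $v_{l+1}=-\half\bigl(\d_xv_l+\sum_{a+b=l}v_av_b\bigr)$, so each $v_l$ is a universal differential polynomial vanishing at $w_*=0$; together with the (standard, worth one explicit line) facts that the coefficients of $(L^{j+\half})_+$ below the leading $\d_x^{2j+1}$ are universal differential polynomials vanishing at $w_*=0$, and that $\d_{t_n}w$ is again such a polynomial (either via $\d_{t_n}w=\d_x R_n$ as you cite, or directly because $\bigl[(L^{n+\half})_+,L\bigr]$ has coefficients of this type), equation~\eqref{eq:equations for the wave function} gives that every $z^{-l}$-coefficient of $\d_{t_j}\log\hP=(e^{-\theta_z}-1)F_j$ lies in your ideal $\mcA_0$. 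Your bookkeeping in the inversion step is the crucial point and it is right: only odd $r$ contribute to the coefficient of $z^{-(2m+1)}$, the linear term produces $-\tfrac{(2m-1)!!}{u^{2m}}F_{m,j}$, and the $r\ge3$ terms involve $F_{m_1,\dots,m_r,j}$ with $m_1+\cdots+m_r=m-\tfrac{r-1}{2}\le m-1$, so the induction on $m$ closes with no integration constants, using that the flows map $\mcA$ into $\mcA_0$. Universality and independence of the choice of tau-function come for free, since the derivation only uses properties shared by every dressing operator and tau-function of $L$ (and the ambiguity $F\mapsto F+\text{linear in }t$ does not affect second derivatives), and the case $j=0$ is covered trivially by $(L^{\half})_+=\d_x$. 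In short: a complete proof of a statement the paper delegates to the literature.
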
 
\noindent For example, $P_{0,0}=v_0$. Therefore, $F_{0,0}=w$.

\subsubsection{Recursion for the operators $(L^{n+\half})_+$}

\begin{lemma}
For any $n\in\Z_{\ge 0}$ we have
\begin{gather*}
L^{n+\frac{1}{2}}=\left(L^{n+\frac{1}{2}}\right)_++\frac{(2n+1)!!}{u^{2n}}F_{0,n}\d_x^{-1}-\frac{1}{2}\frac{(2n+1)!!}{u^{2n}}F_{0,0,n}\d_x^{-2}+\ldots.
\end{gather*}
\end{lemma}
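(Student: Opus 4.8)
The plan is to identify the operator $L^{n+\half}$ through its action on the wave function $\psi=\widehat P(t;z)\exp(\xi)$, using the facts assembled above. Since $\d_x^{-k}\exp(\xi)=z^{-k}\exp(\xi)$, we have $\psi=P\exp(\xi)$ and hence $P^{-1}\psi=\exp(\xi)$; together with $L^{n+\half}=P\circ\d_x^{2n+1}\circ P^{-1}$ (which follows from $L=P\circ\d_x^2\circ P^{-1}$) this gives $L^{n+\half}\psi=P\circ\d_x^{2n+1}\exp(\xi)=z^{2n+1}\psi$. On the other hand, by \eqref{eq:equations for the wave function}, $\bigl(L^{n+\half}\bigr)_+\psi=\frac{(2n+1)!!}{u^{2n}}\frac{\d\psi}{\d t_n}$. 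Subtracting the second identity from the first isolates the negative part of the operator:
\[
\bigl(L^{n+\half}\bigr)_-\psi=z^{2n+1}\psi-\frac{(2n+1)!!}{u^{2n}}\frac{\d\psi}{\d t_n}.
\]

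Next I would rewrite the right-hand side. Since $\frac{\d\xi}{\d t_n}=\frac{u^{2n}z^{2n+1}}{(2n+1)!!}$, differentiating $\psi=\widehat P\exp(\xi)$ gives $\frac{(2n+1)!!}{u^{2n}}\frac{\d\psi}{\d t_n}=z^{2n+1}\psi+\frac{(2n+1)!!}{u^{2n}}\frac{\d\widehat P}{\d t_n}\exp(\xi)$, so the $z^{2n+1}\psi$ terms cancel and $\bigl(L^{n+\half}\bigr)_-\psi=-\frac{(2n+1)!!}{u^{2n}}\frac{\d\widehat P}{\d t_n}\exp(\xi)$. Dividing by $\psi=\widehat P\exp(\xi)$, and using $\widehat P=\frac{G_z(\tau)}{\tau}$ (so that $\ln\widehat P=G_z(F)-F$ with $F=\ln\tau$) together with the fact that $\frac{\d}{\d t_n}$ commutes with $G_z$, this becomes
\[
\frac{\bigl(L^{n+\half}\bigr)_-\psi}{\psi}=-\frac{(2n+1)!!}{u^{2n}}\frac{\d}{\d t_n}\ln\widehat P=\frac{(2n+1)!!}{u^{2n}}\bigl(F_n-G_z(F_n)\bigr).
\]

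The shift defining $G_z$ moves $t_0$ by $-\frac1z$ and each $t_i$ with $i\ge1$ by terms of order $z^{-3}$, so a Taylor expansion gives $F_n-G_z(F_n)=\frac1z F_{0,n}-\frac1{2z^2}F_{0,0,n}+O(z^{-3})$. On the other hand, writing $\bigl(L^{n+\half}\bigr)_-=\sum_{j\ge1}c_j\d_x^{-j}$ (so that $c_j$ is also the coefficient of $\d_x^{-j}$ in $L^{n+\half}$, since $\bigl(L^{n+\half}\bigr)_+$ is a differential operator), the formula $\d_x^{-j}\psi=\bigl(z^{-j}+O(z^{-j-1})\bigr)\exp(\xi)$ shows that the coefficients of $z^{-1}$ and $z^{-2}$ in $\frac{\bigl(L^{n+\half}\bigr)_-\psi}{\psi}$ are exactly $c_1$ and $c_2$. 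Comparing the two expansions then yields $c_1=\frac{(2n+1)!!}{u^{2n}}F_{0,n}$ and $c_2=-\half\frac{(2n+1)!!}{u^{2n}}F_{0,0,n}$, which is the claim. The case $n=0$ I would treat directly by solving $\bigl(L^{\half}\bigr)^2=L$, which gives $L^{\half}=\d_x+w\d_x^{-1}-\half w_x\d_x^{-2}+\dots$, and then $w=F_{0,0}$, $w_x=F_{0,0,0}$.

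The one point that needs care is the last claim, that the $z^{-1}$- and $z^{-2}$-coefficients of the ratio $\frac{\bigl(L^{n+\half}\bigr)_-\psi}{\psi}$ are precisely $c_1$ and $c_2$: this is a short but slightly fiddly computation with pseudo-differential operators acting on $\exp(\xi)$ (equivalently, with the symbols of $\bigl(L^{n+\half}\bigr)_-\circ P$ and of $P$), in which one checks that the contributions of the lower-order terms of $\d_x^{-1}\psi$ cancel against the expansion of $\psi^{-1}$ and that only finitely many of the $c_j$ contribute to each of these two coefficients. Everything else is formal manipulation of the identities for $\psi$, $\widehat P$ and $\tau$ recalled earlier in the paper.
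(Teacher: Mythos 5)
Your argument is correct, but it follows a genuinely different route from the paper. The paper's proof stays entirely inside the Lax formalism: writing $L^{n+\frac12}=(L^{n+\frac12})_++a_n\partial_x^{-1}+b_n\partial_x^{-2}+\ldots$, it extracts $\partial_xa_n=\frac{(2n+1)!!}{u^{2n}}F_{0,0,n}$ from $[L^{n+\frac12},L]=0$ together with the Lax equation, fixes the integration constant by the fact (previous appendix lemma) that $a_n$ and $F_{0,n}$ are universal differential polynomials in $w$ vanishing at $w=0$, and then obtains $b_n$ from the recursion $(L^{n+\frac32})_+=(L^{n+\frac12})_+\circ L+a_n\partial_x+b_n$ combined with the skew-adjointness $\bigl((L^{k/2})_+\bigr)^*=(-1)^k(L^{k/2})_+$. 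You instead invoke the dressing-operator/wave-function/tau-function package recalled in Section 2: your identity $(L^{n+\frac12})_-\psi=-\frac{(2n+1)!!}{u^{2n}}\frac{\partial\widehat P}{\partial t_n}\exp(\xi)$ is the Sato--Wilson equation in disguise, and the relation $\widehat P=G_z(\tau)/\tau$ converts it into the expansion $F_n-G_z(F_n)=\frac{1}{z}F_{0,n}-\frac{1}{2z^2}F_{0,0,n}+O(z^{-3})$. The "fiddly" step you flag does check out: with $(L^{n+\frac12})_-=c_1\partial_x^{-1}+c_2\partial_x^{-2}+\ldots$ and $P=1+p_1\partial_x^{-1}+\ldots$ one finds the symbol of $(L^{n+\frac12})_-\circ P$ equals $c_1z^{-1}+(c_2+c_1p_1)z^{-2}+O(z^{-3})$, and dividing by $\widehat P=1+p_1z^{-1}+O(z^{-2})$ cancels the $c_1p_1$ term, so the $z^{-1}$- and $z^{-2}$-coefficients of the ratio are indeed $c_1$ and $c_2$; your separate treatment of $n=0$ is also needed, since \eqref{eq:equations for the wave function} is only stated for $n\in\frac12\Z_{\ge1}$. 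What your route buys is that both coefficients (indeed, in principle all of them) come out in one stroke with no integration constant to fix; what it costs is reliance on heavier background (existence of the dressing operator, the Sato--Wilson/wave-function equations, and the tau-function relation $\widehat P=G_z(\tau)/\tau$), which the paper only recalls without proof, whereas the paper's own argument is elementary and self-contained given the Lax equations and the differential-polynomial lemma.
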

\begin{proof}
Let 
\begin{gather*}
L^{n+\frac{1}{2}}=\left(L^{n+\frac{1}{2}}\right)_++a_n\d_x^{-1}+b_n\d_x^{-2}+\ldots.
\end{gather*}
We have $[L^{n+\frac{1}{2}},L]=0$, therefore,
\begin{gather}\label{eq:eq1}
\left[\left(L^{n+\frac{1}{2}}\right)_++a_n\d_x^{-1}+b_n\d_x^{-2},\d_x^2+2w\right]_+=0.
\end{gather}
Since $w=F_{0,0}$, we have $\frac{u^{2n}}{(2n+1)!!}\left[\left(L^{n+\frac{1}{2}}\right)_+,L\right]=2F_{0,0,n}$. Expanding~\eqref{eq:eq1} we get
$$
2\frac{(2n+1)!!}{u^{2n}}F_{0,0,n}-2\d_x a_n=0.
$$
Note that $a_n$ is a universal differential polynomial in $w,w_x,w_{xx},\ldots$. It is also easy to see that~$a_n$ is equal to zero, if~$w$ is zero. Therefore, we conclude that $a_n=\frac{(2n+1)!!}{u^{2n}}F_{0,n}$.

Let us compute $b_n$. We have
\begin{gather}\label{eq:recursion1}
\left(L^{n+\frac{3}{2}}\right)_+=\left(L^{n+\frac{1}{2}}\right)_+\circ L+a_n\d_x+b_n.
\end{gather}
Note that for any non-negative integer $k$ we have 
$$
\left(L^{\frac{k}{2}}\right)^*_+=(-1)^k\left(L^{\frac{k}{2}}\right)_+.
$$
Thus,
\begin{gather}\label{eq:conjugate}
-\left(L^{n+\frac{3}{2}}\right)_+=-L\circ\left(L^{n+\frac{1}{2}}\right)_+-\d_xa_n-a_n\d_x+b_n.
\end{gather}
Summing~\eqref{eq:recursion1} and~\eqref{eq:conjugate}, we obtain
$$
\left[\left(L^{n+\frac{1}{2}}\right)_+,L\right]-\d_x a_n+2b_n=0.
$$
Thus, $b_n=-\frac{1}{2}\frac{(2n+1)!!}{u^{2n}}F_{0,0,n}$. The lemma is proved.
\end{proof}

\begin{lemma}\label{lemma:recursion lemma}
For any $n\in\Z_{\ge 1}$ we have
$$
\left(L^{n+\half}\right)_+=\left(L^{n-\half}\right)_+\circ L+\frac{(2n-1)!!}{u^{2n-2}}F_{0,n-1}\d_x-\frac{1}{2}\frac{(2n-1)!!}{u^{2n-2}}F_{0,0,n-1}.
$$
\end{lemma}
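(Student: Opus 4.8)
The plan is to read off the identity directly from the lemma immediately preceding it together with the composition rule for pseudo-differential operators; no genuinely new ingredient is needed. Set $m:=n-1\ge 0$. The preceding lemma gives
$$
L^{m+\half}=\bigl(L^{m+\half}\bigr)_++\frac{(2m+1)!!}{u^{2m}}F_{0,m}\d_x^{-1}-\half\frac{(2m+1)!!}{u^{2m}}F_{0,0,m}\d_x^{-2}+\cdots,
$$
where the dots stand for a pseudo-differential operator of order at most $-3$.

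First I would use $L^{n+\half}=L^{m+\half}\circ L$ and multiply the displayed identity on the right by $L=\d_x^2+2w$:
$$
L^{n+\half}=\bigl(L^{m+\half}\bigr)_+\circ L+\Bigl(\frac{(2m+1)!!}{u^{2m}}F_{0,m}\d_x^{-1}-\half\frac{(2m+1)!!}{u^{2m}}F_{0,0,m}\d_x^{-2}+\cdots\Bigr)\circ L.
$$
Then I would take the differential-operator part $(\cdot)_+$ of both sides. The summand $\bigl(L^{m+\half}\bigr)_+\circ L$ is already a differential operator and is unchanged. For the negative-order tail I would apply the commutation rule $\d_x^{-k}\circ f=\sum_{l\ge 0}\binom{-k}{l}f^{(l)}\d_x^{-k-l}$ to compute, modulo operators of order at most $-1$,
$$
\d_x^{-1}\circ L=\d_x^{-1}\circ\d_x^2+2\,\d_x^{-1}\circ w=\d_x+\cdots,\qquad \d_x^{-2}\circ L=1+\cdots,
$$
while $\d_x^{-k}\circ L$ has order at most $-1$ for every $k\ge 3$. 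Hence the only contributions of the tail to $(\cdot)_+$ are $\frac{(2m+1)!!}{u^{2m}}F_{0,m}\d_x$ and $-\half\frac{(2m+1)!!}{u^{2m}}F_{0,0,m}$. Substituting $m=n-1$ back in yields
$$
\bigl(L^{n+\half}\bigr)_+=\bigl(L^{n-\half}\bigr)_+\circ L+\frac{(2n-1)!!}{u^{2n-2}}F_{0,n-1}\d_x-\half\frac{(2n-1)!!}{u^{2n-2}}F_{0,0,n-1},
$$
which is the assertion; this is in fact exactly equation~\eqref{eq:recursion1} from the proof of the preceding lemma with $n$ replaced by $n-1$, once the values of $a_{n-1}$ and $b_{n-1}$ established there are inserted.

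Since each step is an immediate application of the composition rule, there is no substantial obstacle. The only point that needs a little care is the bookkeeping of nonnegative powers of $\d_x$ after composing the order-$\le -1$ part of $L^{n-\half}$ with $L=\d_x^2+2w$, i.e.\ verifying that the remainders hidden in the $\cdots$ above genuinely contribute nothing to $(\cdot)_+$ — which is transparent from the stated commutation rule, since multiplying by a coefficient preserves the order filtration.
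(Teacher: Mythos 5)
Your proposal is correct and follows essentially the same route as the paper: the paper's proof simply cites equation~\eqref{eq:recursion1} (with $n$ replaced by $n-1$) together with the values $a_{n-1}=\frac{(2n-1)!!}{u^{2n-2}}F_{0,n-1}$ and $b_{n-1}=-\frac{1}{2}\frac{(2n-1)!!}{u^{2n-2}}F_{0,0,n-1}$ from the preceding lemma, and your argument is just the explicit verification of that recursion by right-composing the expansion of $L^{n-\half}$ with $L$ and taking $(\cdot)_+$ parts. The bookkeeping of the negative-order tail is handled correctly, so nothing is missing.
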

\begin{proof}
This is just equation~\eqref{eq:recursion1} with $n$ replaced by $n-1$.
\end{proof}


\subsection{Properties of the series $A,B$ and $D$}\label{subsection:A,B and D}

Recall that $S_y:=\frac{y^3}{u^2}\der{}{y}+\frac{y^2}{2u^2}+\frac{1}{y}$, $\hS_y:=-\frac{y^3}{u^2}\der{}{y}-\frac{3y^2}{2u^2}+\frac{1}{y}$, $A(y):=1+\sum_{i\ge 1}a_i\frac{y^{3i}}{u^{2i}}$ and $D:=1+\sum_{i\ge 1}d_i\frac{y^{3i}}{u^{2i}}$.

Let $C$ be a power series of the form $C=1+\sum_{i\ge 1}c_i\frac{y^{3i}}{u^{2i}}$, where $c_i$ are some complex coefficients. Consider the differential equation
\begin{gather}\label{eq:differential equation for A}
S_y^2C=\frac{1}{y^2}C.
\end{gather} 

\begin{lemma}\label{lemma:main property of A}
The series $A(y)$ is a unique solution of equation \eqref{eq:differential equation for A}.
\end{lemma}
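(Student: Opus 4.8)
The plan is to verify directly that $A(y)$ solves the equation \eqref{eq:differential equation for A}, and then to establish uniqueness by a formal power-series argument in $y^{3}/u^{2}$. Throughout, it is convenient to set $q:=y^{3}/u^{2}$, so that any series of the form $1+\sum_{i\ge1}c_i\frac{y^{3i}}{u^{2i}}$ is simply a power series in $q$ with constant term $1$, and to note that $S_y$ acts on such series in a way that respects this grading: applying $S_y$ raises the "$q$-weight" in a controlled manner. First I would record how $S_y$ acts on the monomial $y^{3i}/u^{2i} = q^{i}$. A direct computation gives
$$
S_y\left(\frac{y^{3i}}{u^{2i}}\right)=\frac{y^{3}}{u^{2}}\cdot\frac{3i\,y^{3i-1}}{u^{2i}}+\frac{y^{2}}{2u^{2}}\cdot\frac{y^{3i}}{u^{2i}}+\frac{1}{y}\cdot\frac{y^{3i}}{u^{2i}}
=\left(3i+\frac12\right)\frac{y^{3i+2}}{u^{2i+2}}+\frac{y^{3i-1}}{u^{2i}},
$$
so $S_y$ maps the "$q^{i}$-layer" into the span of $y^{3i+2}/u^{2i+2}$ and $y^{3i-1}/u^{2i}$. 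Applying $S_y$ a second time and collecting terms, one finds that $S_y^{2}$ sends $y^{3i}/u^{2i}$ to a combination of $y^{3i+4}/u^{2i+4}$, $y^{3i+1}/u^{2i+2}$ (two contributions that combine) and $y^{3i-2}/u^{2i}$; crucially the middle term has the shape $\frac{1}{y^{2}}q^{i}\cdot(\text{scalar})$, while the extreme terms are $q^{i+\text{shift}}/y^{2}$ up to scalars. Pairing this against $\frac{1}{y^{2}}C$ and matching the coefficient of each power $y^{3i+1}/u^{2i+2}$ yields a two-term recursion determining $c_{i}$ in terms of $c_{i-1}$ (the $c_{i+1}$-contribution from the $S_y^{2}q^{i}$ term cancels against the $1/y^2$ side, or rather contributes to the next equation), with $c_0=1$ fixed.

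The second step is to check that the numbers $a_i=(-1)^i\frac{(6i)!}{288^i(2i)!(3i)!}$ satisfy exactly this recursion. After the bookkeeping of the first step, the recursion will read (schematically) $\bigl(3i-\tfrac12\bigr)\bigl(3i-\tfrac32\bigr)c_{i-1}+(\text{lower coefficient of }S_y^2) = (\text{coefficient from }\tfrac1{y^2}C)$, i.e. a relation of the form $c_i = -\lambda_i\, c_{i-1}$ for an explicit rational $\lambda_i>0$. One then verifies $a_i/a_{i-1}=-\lambda_i$ by elementary manipulation of factorials: $\frac{a_i}{a_{i-1}} = -\frac{1}{288}\cdot\frac{(6i)!\,(2i-2)!\,(3i-3)!}{(6i-6)!\,(2i)!\,(3i)!}$, and the right-hand side collapses, after cancelling consecutive factorials, to a product of six linear factors over $288\cdot(2i)(2i-1)(3i)(3i-1)(3i-2)$, which one checks equals the $-\lambda_i$ coming from the differential equation. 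This is the computational heart of the argument and the step I expect to be the main obstacle: getting the constants in the $S_y^2$ expansion exactly right (the $\tfrac12$ in $\frac{y^2}{2u^2}$ and the $\tfrac1y$ term interact non-trivially after squaring), so that the factorial identity matches on the nose rather than up to an unwanted factor. I would do this carefully, perhaps checking $i=1,2$ by hand as a sanity test before asserting the general pattern.

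For uniqueness, the point is that equation \eqref{eq:differential equation for A}, read coefficient by coefficient in powers of $y$ (with $u$ tracked), becomes a triangular system: the equation obtained by extracting the coefficient of $y^{3i+1}/u^{2i+2}$ involves $c_i$ with a nonzero coefficient (namely the positive rational $\lambda_i$ above, which never vanishes for $i\ge1$) and otherwise only $c_{j}$ with $j<i$. Since $c_0=1$ is forced by the normalization $C=1+O(q)$, every $c_i$ is then uniquely determined by downward induction. Hence any solution of the prescribed shape coincides with $A(y)$, and since we have exhibited $A(y)$ as a solution, it is the unique one. I would phrase this uniqueness part as: "the recursion $\lambda_i c_i = (\text{polynomial in } c_0,\dots,c_{i-1})$ with $\lambda_i\ne0$ and $c_0=1$ has a unique solution," which is immediate, and then note that Step 2 shows $(a_i)$ is that solution.
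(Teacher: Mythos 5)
Your proposal is correct and follows essentially the same route as the paper, which also verifies the claim by a direct coefficient computation (after first rewriting $S_y^2C=\frac{1}{y^2}C$ as the second-order ODE $\bigl(\d_y^2+(\tfrac{4}{y}+\tfrac{2u^2}{y^4})\d_y+\tfrac{5}{4y^2}\bigr)C=0$). Your plan does close as stated: expanding $S_y^2$ on monomials gives $S_y^2\bigl(\tfrac{y^{3i}}{u^{2i}}\bigr)=\tfrac{1}{y^2}\bigl[(3i+\tfrac12)(3i+\tfrac52)\tfrac{y^{3i+6}}{u^{2i+4}}+6i\,\tfrac{y^{3i+3}}{u^{2i+2}}+\tfrac{y^{3i}}{u^{2i}}\bigr]$, so the diagonal term cancels and the equation forces $c_i=-\frac{(6i-1)(6i-5)}{24i}c_{i-1}$ with $c_0=1$ (hence uniqueness), which is exactly the ratio $a_i/a_{i-1}$.
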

\begin{proof}
Equation~\eqref{eq:differential equation for A} implies that 
\begin{gather}\label{eq:tmp3}
\left(\frac{\d^2}{\d y^2}+\left(\frac{4}{y}+\frac{2u^2}{y^4}\right)\frac{\d}{\d y}+\frac{5}{4y^2}\right)C=0.
\end{gather}
The fact, that the series $A(y)$ is a unique solution of~\eqref{eq:tmp3}, can be checked by a simple direct computation.
\end{proof}

\begin{lemma}\label{lemma:equation for D}
The series $D$ satisfies the differential equation $\hS_y D=\frac{A(-y)}{y}$.
\end{lemma}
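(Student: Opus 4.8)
The plan is to turn the identity into an elementary recursion for the coefficients $d_n$ and then to verify that recursion directly from the closed formula~\eqref{eq:definition of D}. It is convenient to set $T:=\frac{y^3}{u^2}$, so that $D=\sum_{n\ge 0}d_n T^n$ with $d_0=1$, and $A(y)=\sum_{n\ge 0}a_n T^n$ with $a_0=1$. Since $a_n=(-1)^n|a_n|$ by the definition of $a_n$ and $(-y)^{3n}=(-1)^n y^{3n}$, we have $A(-y)=\sum_{n\ge 0}|a_n|T^n$, so the right-hand side of the lemma equals $\frac{1}{y}\sum_{n\ge 0}|a_n|T^n$.

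First I would compute $\hS_y D$ term by term. Using $\d_y(T^n)=\frac{3n}{y}T^n$ together with $\hS_y=-\frac{y^3}{u^2}\d_y-\frac{3y^2}{2u^2}+\frac{1}{y}$, one obtains
$$
\hS_y D=\frac{1}{y}\sum_{n\ge 0}d_n T^n-\frac{1}{y}\sum_{n\ge 0}\Bigl(3n+\tfrac{3}{2}\Bigr)d_n\,T^{n+1}=\frac{1}{y}\Bigl(1+\sum_{n\ge 1}\bigl(d_n-(3n-\tfrac{3}{2})d_{n-1}\bigr)T^n\Bigr).
$$
Comparing with $\frac{1}{y}\sum_{n\ge 0}|a_n|T^n$, the lemma becomes equivalent to the two assertions $d_0=1$, which is immediate from~\eqref{eq:definition of D}, and
$$
d_n=|a_n|+\Bigl(3n-\tfrac{3}{2}\Bigr)d_{n-1},\qquad n\ge 1.
$$

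It then remains to check this recursion from the formula $d_n=\sum_{i=0}^{n}3^i|a_{n-i}|\,p_n(i)$, where I abbreviate $p_n(i):=\prod_{k=1}^{i}\bigl(n+\half-k\bigr)$, so that $p_n(0)=1$. The crux of the argument is the shift identity $p_n(i)=\bigl(n-\half\bigr)p_{n-1}(i-1)$ for $i\ge 1$, which is visible directly from the product. Since $3n-\tfrac{3}{2}=3\bigl(n-\half\bigr)$, multiplying the $j$-th term $3^j|a_{n-1-j}|p_{n-1}(j)$ of $d_{n-1}$ by $3n-\tfrac{3}{2}$ produces exactly $3^{j+1}|a_{n-1-j}|p_n(j+1)$; re-indexing by $i=j+1$ shows $\bigl(3n-\tfrac{3}{2}\bigr)d_{n-1}=\sum_{i=1}^{n}3^i|a_{n-i}|p_n(i)$, and adding the $i=0$ term, which is $|a_n|$ because $p_n(0)=1$, gives $d_n$. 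This completes the argument. There is no genuine obstacle here: the only places that demand care are the correct bookkeeping of the powers of $y$ and $u$ when expanding $\hS_y D$ and the index shift governed by the identity for $p_n(i)$.
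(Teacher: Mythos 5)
Your proof is correct and follows essentially the same route as the paper: the paper's argument is exactly to extract the recursion $d_n=(3n-\tfrac{3}{2})d_{n-1}+|a_n|$ from the closed formula~\eqref{eq:definition of D} and note that it is equivalent to $\hS_y D=\frac{A(-y)}{y}$; you have simply written out both halves (the coefficient comparison and the index-shift verification of the recursion) in full detail, and both computations check out.
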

\begin{proof}
From \eqref{eq:definition of D} it immediately follows that for any $n\ge 1$ we have
$$
d_n=\left(3n-\frac{3}{2}\right)d_{n-1}+|a_n|.
$$
This easily implies the statement of the lemma.
\end{proof}

Recall that the series $B(y)$ is defined by $B(y):=y S_yA(y)$.
\begin{lemma}\label{lemma:square operator}
We have $S_y\left(\frac{B(y)}{y}\right)=\frac{A(y)}{y^2}$.
\end{lemma}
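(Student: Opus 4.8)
The statement is, after unwinding the definition of $B$, an immediate consequence of Lemma~\ref{lemma:main property of A}, so the plan is very short. First I would use the defining relation $B(y)=yS_yA(y)$, which gives at once $\frac{B(y)}{y}=S_yA(y)$; hence the left-hand side of the asserted identity equals $S_y\bigl(S_yA(y)\bigr)=S_y^2A(y)$. Next I would invoke Lemma~\ref{lemma:main property of A}, according to which $A(y)$ is a solution of equation~\eqref{eq:differential equation for A}, i.e.\ $S_y^2A=\frac{1}{y^2}A$. Combining the two steps yields
$$
S_y\left(\frac{B(y)}{y}\right)=S_y^2A(y)=\frac{1}{y^2}A(y),
$$
which is exactly the claim.

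The only step that requires any care — and it is hardly an obstacle — is to make sure that all of this takes place in a ring on which $S_y=\frac{y^3}{u^2}\der{}{y}+\frac{y^2}{2u^2}+\frac{1}{y}$ is a well-defined operator. Since $A(y)\in\mbC[u,u^{-1}][[y]]$ with $A(0)=1$ and $S_y$ increases the pole order in $y$ by at most one, both $S_yA(y)$ and $S_y^2A(y)$ are honest Laurent series in $y$ (in fact $S_yA(y)=\frac{1}{y}B(y)$ has a simple pole and $S_y^2A(y)=\frac{1}{y^2}A(y)$ a double pole), so every manipulation above is literal. If one preferred to avoid quoting Lemma~\ref{lemma:main property of A}, the identity could instead be checked coefficient by coefficient using $b_i=-\frac{6i+1}{6i-1}a_i$ and the recursion $a_i/a_{i-1}=-\frac{(6i-1)(6i-5)}{24i}$ for the coefficients of $A$, but the one-line argument via Lemma~\ref{lemma:main property of A} renders such a computation superfluous.
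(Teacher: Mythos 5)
Your proof is correct and follows the same route as the paper: since $B(y)=yS_yA(y)$, the claim is just $S_y^2A(y)=\frac{A(y)}{y^2}$, which is Lemma~\ref{lemma:main property of A}. The extra remarks on well-definedness and the coefficient check are fine but not needed.
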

\begin{proof}
The equation is equivalent to the equation $S_y^2 A(y)=\frac{A(y)}{y^2}$, that follows from Lemma~\ref{lemma:main property of A}.
\end{proof}

\begin{lemma}\label{lemma:QR coefficients}
For any $n\ge 0$ we have 
$$
S_y^n A(y)=Q_n A(y)+R_n \frac{B(y)}{y}
$$
for some Laurent polynomials $Q_n,R_n\in\mbC[u^{-1},y^{-1}]$, that are even as functions of $y$.
\end{lemma}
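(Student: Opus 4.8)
The plan is to prove the statement by induction on $n$, obtaining along the way an explicit recursion for the pair $(Q_n,R_n)$. For the base case $n=0$ I take $Q_0:=1$ and $R_0:=0$; these are even Laurent polynomials lying in $\mbC[u^{-1},y^{-1}]$, and indeed $S_y^0 A(y)=A(y)=Q_0 A(y)+R_0\frac{B(y)}{y}$.

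For the inductive step, the key observation is that $S_y$ is a first order differential operator, hence satisfies the Leibniz-type identity
$$S_y(pq)=\frac{y^3}{u^2}\,p_y\,q+p\,S_y(q)$$
for any Laurent series $p,q$. Applying $S_y$ to $S_y^n A(y)=Q_n A(y)+R_n\frac{B(y)}{y}$ and using this identity together with the definition $B(y)=y S_y A(y)$, i.e. $S_y A(y)=\frac{B(y)}{y}$, and Lemma~\ref{lemma:square operator}, i.e. $S_y\!\left(\frac{B(y)}{y}\right)=\frac{A(y)}{y^2}$, I obtain
$$S_y^{n+1}A(y)=\left(\frac{y^3}{u^2}(Q_n)_y+\frac{R_n}{y^2}\right)A(y)+\left(\frac{y^3}{u^2}(R_n)_y+Q_n\right)\frac{B(y)}{y}.$$
It is therefore natural to set
$$Q_{n+1}:=\frac{y^3}{u^2}(Q_n)_y+\frac{R_n}{y^2},\qquad R_{n+1}:=\frac{y^3}{u^2}(R_n)_y+Q_n,$$
after which it only remains to check that $Q_{n+1}$ and $R_{n+1}$ again lie in $\mbC[u^{-1},y^{-1}]$ and are even in $y$.

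Evenness is immediate: if $P$ is even in $y$ then $P_y$ is odd and $\frac{y^3}{u^2}P_y$ is even, while $\frac{R_n}{y^2}$ and $Q_n$ are even by the induction hypothesis, so $Q_{n+1}$ and $R_{n+1}$ are even. The one point that needs a small argument — and essentially the only subtlety in the whole lemma — is that the operator $\frac{y^3}{u^2}\partial_y$, which a priori could raise the power of $y$, in fact maps even elements of $\mbC[u^{-1},y^{-1}]$ back into $\mbC[u^{-1},y^{-1}]$: applied to an even monomial $c\,u^{-a}y^{-2k}$ it gives $0$ when $k=0$ and $-\frac{2kc}{u^2}u^{-a}y^{2-2k}$ when $k\ge 1$, and in the latter case $2-2k\le 0$. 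Division by $y^2$ obviously preserves membership in $\mbC[u^{-1},y^{-1}]$. This closes the induction and exhibits the required Laurent polynomials $Q_n,R_n$.

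Since every step is routine once the recursion is written down, I do not expect a genuine obstacle; the thing to be careful about is precisely the closure check above, which works only because one is differentiating \emph{even} Laurent polynomials in $y^{-1}$, so that the constant term — the sole term that could produce a positive power of $y$ — is annihilated.
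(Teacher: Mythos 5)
Your proof is correct and follows essentially the same route as the paper: the same recursion $Q_{n+1}=\frac{y^3}{u^2}(Q_n)_y+\frac{R_n}{y^2}$, $R_{n+1}=Q_n+\frac{y^3}{u^2}(R_n)_y$, verified by induction using $B(y)=yS_yA(y)$ and Lemma~\ref{lemma:square operator}. Your explicit check that $\frac{y^3}{u^2}\partial_y$ maps even elements of $\mbC[u^{-1},y^{-1}]$ back into $\mbC[u^{-1},y^{-1}]$ is exactly the point the paper leaves as ``easy to see,'' so nothing is missing.
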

\begin{proof}
Define a sequence of Laurent polynomials $Q_n,R_n\in\mbC[u^{-1},y,y^{-1}]$, $n\ge 0$, by
\begin{align*}
& Q_0=1,                                                        && R_0=0,\\
& Q_{n+1}=\frac{y^3}{u^2}\frac{\d Q_n}{\d y}+\frac{1}{y^2}R_n,  && R_{n+1}=Q_n+\frac{y^3}{u^2}\frac{\d R_n}{\d y}.
\end{align*}
This recursion immediately implies that~$Q_n$ and~$R_n$ are even as functions of~$y$. Using this fact it is easy to see that, actually, $Q_n,R_n\in\mbC[u^{-1},y^{-1}]$. Let us check that
$$
S_y^n A(y)=Q_n A(y)+R_n\frac{B(y)}{y}.
$$
The proof is by induction on~$n$. This equation is obviously true for~$n=0$. For an arbitrary~$n$ we compute
$$
S_y\left(Q_n A(y)+R_n\frac{B(y)}{y}\right)\stackrel{\text{by Lemma~\ref{lemma:square operator}}}{=}\left(\frac{y^3}{u^2}\frac{\d Q_n}{\d y}+\frac{1}{y^2}R_n\right)A(y)+\left(Q_n+\frac{y^3}{u^2}\frac{\d R_n}{\d y}\right)\frac{B(y)}{y}.
$$
The lemma is proved.
\end{proof}

}

\end{document}